\newcommand{\R}{\mathbb{R}}
\newtheorem{theorem}{Theorem}
\newtheorem{proposition}{Proposition}
\newtheorem{corollary}{Corollary}
\newtheorem{definition}{Definition}
\begin{document}
\thispagestyle{empty}
\setcounter{page}{1}

\noindent
%{\footnotesize {\rm To appear in\\[-1.00mm]
%{\em Dynamics of Continuous, Discrete and Impulsive Systems}}\\[-1.00mm]
%http:monotone.uwaterloo.ca/$\sim$journal} $~$ \\ [.3in]

%%USE THE STUFF BELOW AS A GUIDE TO SET UP THE START OF PAPER%%

\begin{center}
{\large\bf Symmetric Relative Equilibria in the Four-Vortex Problem with Three Equal Vorticities}

\vskip.20in

Ernesto P\'erez-Chavela$^{1}$,  Manuele Santoprete$^{2}$ \ and\ Claudia Tamayo $^{1}$ \\[2mm]
{\footnotesize
$^{1}$Departamento de Matem\'aticas\\
Universidad Aut\'onoma Metropolitana-Iztapalapa. M\'exico, D.F., M\'exico\\[5pt]
$^{2}$Department of Mathematics\\
Wilfrid Laurier University. Waterloo, Ontario, Canada\\
}
\end{center}

{\footnotesize
\noindent
{\bf Abstract.}  We examine in detail the relative equilibria of the 4-vortex problem when three vortices have equal strength, that is, $\Gamma_{1} = \Gamma_{2} = \Gamma_{3} = 1$, and $\Gamma_{4}$ is a real parameter.  We give the exact number of  relativa equilibria and bifurcation values.    We also study the relative equilibria in the vortex rhombus problem. \\[3pt]
{\bf Keywords.}  Four-vortex problem,   collinear relative equilibria, concave relative equilibria, convex relative equilibria, Gr\"obner bases.\\[3pt]
{\small\bf AMS (MOS) subject classification:} 76F20, 74F10.
}

\vskip.2in

%But please supply them whenever possible.

\section{Introduction}

The origins of vortex dynamics lie in the famous work of  Helmholtz of  1858 , where he introduced the concepts of vortex line, vortex filament and derived the vorticity equation for an ideal incomprenssible fluid \cite{H}.   Helmholtz also introduced planar point vortices and their equations of motion in order to model a $2$-dimensional slice of columnar vortex filaments.  Some years later, Kirchhoff (1876)  gave a Hamiltonian formulation of Helmholtz's equations for point vortices (see \cite{K} for more details).  This model has been widely used to provide finite-dimensional approximation to vorticity evolution in fluid dynamics.  Kirchhoff proved that $n$ point vortices in the plane located at $z_i=(x_i, y_i) \in \R^{2}$ with vortex strength $\Gamma_i \neq 0 \in \R$ for $i=1, \dots, n$ satisfy
 \begin{equation}\label{Kirchof}
\Gamma_{i} \frac{dx_{i}}{dt} = \frac{\partial H}{\partial y_{i}}, \qquad \Gamma_{i} \frac{dy_{i}}{dt} = -\frac{\partial H}{\partial x_{i}},
\end{equation}
where
\[
H = -\frac{1}{2}\sum_{i<j} \Gamma_{i}\Gamma_{j} \log [(x_{i} - x_{j})^{2} + (y_{i} - y_{j})^{2}].
\]

Computing the derivatives indicated in  (\ref{Kirchof}), we can find the velocity of the $i$-th vortex:
\begin{equation}\label{Hamiltoniano}
\frac{dx_{i}}{dt} = -\sum_{j\neq i} \Gamma_{j}\frac{y_{i} - y_{j}}{r_{ij}^{2}}, \qquad  \frac{dy_{i}}{dt} = \sum_{j\neq i} \Gamma_{j}\frac{x_{i} - x_{j}}{r_{ij}^{2}},
\end{equation}
where
\[
r_{ij}^{2} = (x_{i} - x_{j})^{2} + (y_{i} - y_{j})^{2}.
\] 

Let $
\textbf{J} =
\left(
\begin{array}{cc}
0  &  \textbf{I} \\
-\textbf{I} & 0
\end{array}
\right),
$ the standard symplectic matrix $2n \times 2n$, and let $\nabla_j$ denote the two-dimensional partial gradient with respect to $z_j$, the previous equation can be written in vectorial form as
\begin{equation}\label{em}
\textbf{G} \dot{z_{i}} = \textbf{J} \nabla_{i} H = -J \sum_{j=1}^{n}\frac{\Gamma_{i}\Gamma_{j}}{r_{ij}}(z_{i}-z_{j}), \qquad 1 \leq i \leq n,
\end{equation}
where $\textbf{G} = \; \text{diag}(\Gamma_{1}, \cdots ,\Gamma_{n})$ is the diagonal matrix.

In general the $n$-vortex system is simpler than the $n$-body problem of point masses governed by Newtonian gravity for a given $n$.  For example the three-vortex system is always integrable, whereas in the Newtonian three-body problem there are chaotic regimes.  Somewhat offsetting this relative simplicity is the larger set of parameter values that we must investigate  (since $\Gamma_i<0$ is allowed).

In 2001, Kossin and Schubart  \cite{KS} conducted numerical experiments describing the evolution of thin annular rings with large vorticity as a model for the behavior seen in the eyeball of intensifying hurricanes.  In a conservative, idealized setting, they find examples of ``vortex crystals'', formations of mesovortices (namely a vertical vortex of air associated with a thunderstorm that occurs at less  one kilometer from the ground) that rigidly rotate as a solid body. One particular formation of four vortices, situated very close to a rhombus configuration, is observed to last for the final 18 hours of a 24-hour simulation.  Rigidly rotating polygonal configurations have also been found in the eyeballs of hurricanes in weather research and forecasting models from the Hurricane Group at the National Center for Atmospheric Research  (see the website \cite{C} for some revealing simulations).

It is natural to explore these rigidly rotating configurations in a dynamical systems setting by studying relative equilibria of the planar $n$- vortex problem.  

The three-vortex case was extensively studied by Gr\"{o}bli \cite{G}, Kossin and Schubart \cite{KS} and Hern\'andez-Gardu\~{n}o and Lacomba \cite{GL}.  Equilateral triangles are always relative equilibria.  There are also collinear equilibria and/or relative equilibria, which are determined by a cubic equation in a shape parameter with coefficients which are linear in the vortex strengths.  Depending of the parameters it is possible to obtain one, two or three collinear relative equilibria. In 2008 M. Hampton and R. Moeckel proved the finiteness of relative equilibria in the four vortex problem \cite{HM}. The four vortex case with two pairs of equal vorticities was studied by Hampton, Roberts and Santoprete \cite{HS}. The  linear and nonlinear stability of certain symmetric configurations of point vortices on the sphere  forming relative equilibria  was studied in \cite{LMR}. These configurations include, in particular, kite configurations. 

The purpose of this paper is to study the planar relative equilibria of the 4-vortex problem with three equal vorticities.   We set three vorticities equal to 1, and the fourth vorticity $\Gamma_{4}$ is taken as a real parameter.   Our main goal is to classify, describe, and count the number and type of solutions of relative equilibria as $\Gamma_{4}$ varies. We say that a planar non-collinear central configuration of the 4-vortex problem has a kite shape if it has an axis of symmetry passing through two of the vorticities.  The kite configuration is \textit{convex} if none of the bodies is located in the interior of the convex hull of the other three, otherwise and if the configuration is not collinear we say that the kite configuration is \textit{concave}.  When counting solutions, we use the standard convention from celestial mechanics that solutions which are identical under scaling or rotation are considered equivalent.  

When $\Gamma_{4} =1$  (all vortex strengths equal), there are 34 solutions.  In this case, it is not possible to distinguish if the vorticity $\Gamma_{4}$ is at a vertex of the triangle formed by the convex hull of the four vorticities, or if it is in its interior.  There are only four geometrically distinct configurations: a square, an equilateral triangle with a vortex at the center, an equilateral triangle with a vortex on a vertex of the triangle formed by the convex hull of the four vorticities, and a collinear configuration.  This is different from the Newtonian case, where an additional symmetric, concave solution exists, consisting of an isosceles triangle with an interior body on the axis of symmetry.  An interesting bifurcation occurs as $\Gamma_{4}$ decreases through $\Gamma_{4} =1$, the equilateral triangle with the vortex 4  at the center of the triangle goes to an isosceles triangle with the interior vortex on the line of symmetry, and the equilateral triangle with the vortex 4 at the vertex of the equilateral triangle formed by the convex hull of the other three vorticities vanish.  Thus, the number of solutions decreases from 34 to 23 for the case $\Gamma_{4}>1$.  If $\Gamma_{4}$ increases from $0$ through $1$, the equilateral triangle splits into three different solutions.  If the vortex 4 is at the vertex of the equilateral triangle formed by the convex hull of the other three vorticities, then the solution for $\Gamma_{4} = 1$ bifurcates into two different isosceles triangles.  If vortex 4 is at the center of the triangle, the equilateral solution goes to an isosceles configuration.   Thus, the number of solutions decreases  from 34 to 29 for the case $0<\Gamma_{4}<1$.

As $\Gamma_{4}$ flips sign, there is one bifurcation value at $\Gamma_{4} = 0$.  When $\Gamma_{4} = 0$ there are 26 solutions.  The equilateral triangle bifurcates in two different solutions, a isosceles triangle with $\Gamma_{4}$ in its interior, and  $\Gamma_{4}$ on the vertex of an isosceles triangle.  If $\Gamma_{4}$ approach to 0 there is one kite concave configuration having $\Gamma_{4}$ in the interior of an isosceles triangle.  In the collinear case we have bifurcation value at $\Gamma_{4} = -1/2$.

The paper is organized as follows:  In Section 2, we define a relative equilibrium and explain how to use mutual distances as variables in the 4-vortex problem.  In Section 3, we describe the relevant algebraic techniques used to analyze and quantify the number of solutions.  Section 4 examines the interplay between symmetry and equality of vorticities in two special cases: absolute equilibria and rigid translations.  Sections 5 and 6 cover the collinear case and kite configurations.

\section{Relative Equilibria}

A motion of $n$ vortices is said to be a \textit{relative equilibrium} if, and only if there exists a real number $\lambda$, called angular velocity, such that, for every $i,j$ and for all time $t$:
\[
z_{i} - z_{j} = e^{-J\lambda t}(z_{i}(0) - z_{j}(0)).
\]
Then one of the following statements is satisfied:

\begin{itemize}
\item If the $z_{j}$ are constant, then the motion is said to be an \textit{absolute equilibrium}.  In this case, we have $\lambda = 0$.

\item If there exists a velocity of translation $v \neq 0$ such that, for every $j$ and for all time $t$:  $z_{j}(t) = z_{j}(0) + tv$, then the motion is said to be a \textit{rigid translation}.  Again, we have $\lambda = 0$.

\item If $\lambda \neq 0$. then there exists a \textit{center of rotation} $c$ such that, for every $j$ and for all time $t$:
\[
z_{j}(t) = c + e^{-J\lambda t}(z_{j}(0) -c).
\]
If, moreover: $\sum_{j=1}^{n}\Gamma_{j}\neq 0$, the center of rotation is the center o vorticity.
\end{itemize}

In fact, looking for these motions we observe that the search of relative equilibria is equivalent to look for the configurations which generate them, as shown in the following proposition which is proved in \cite{O}:

\begin{proposition}\label{mer}
A motion of $n$ vortices is a relative equilibrium if, and only if, at a certain time, there exists a real number $\lambda$ such that for every $i, j$ we have $ \dot{z_{i}}-\dot{z_{j}} = - \lambda(z_{i} - z_{j})$.  
\begin{itemize}
\item[1.]  It is an absolute equilibrium if, and only if, at a certain time,  for every $j$ we have $ \dot{z}_{j} = 0.$

\item[2.]  It is a rigid translation if, and only if, at a certain time, there exists $v \neq 0$ such that for every $j$ we have $ \dot{z}_{j} = v$.

\item[3.]  It is a relative equilibrium with $\lambda \neq 0$ if, and only if, at a certain time, there exists $c$ such that  for every $j$ we have, 
\end{itemize} 
\begin{equation}\label{re}
\dot{z}_{j} = -\lambda(z_{j} - c).
\end{equation}
\end{proposition}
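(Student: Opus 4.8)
The plan is to prove Proposition \ref{mer} by translating the kinematic definition of a relative equilibrium given just before the statement into an equivalent condition on the velocities $\dot{z}_j$, and then to read off the three special cases. The definition states that a motion is a relative equilibrium if there is a real $\lambda$ with $z_i(t) - z_j(t) = e^{-J\lambda t}\bigl(z_i(0) - z_j(0)\bigr)$ for all $i,j$ and all $t$. I would differentiate this identity with respect to $t$ and then evaluate at a convenient time (say $t = 0$, or more generally at an arbitrary fixed time, relabeling initial conditions accordingly). Differentiating the right-hand side gives $\frac{d}{dt}e^{-J\lambda t} = -J\lambda\, e^{-J\lambda t}$, so that $\dot{z}_i(t) - \dot{z}_j(t) = -J\lambda\, e^{-J\lambda t}\bigl(z_i(0) - z_j(0)\bigr) = -J\lambda\bigl(z_i(t) - z_j(t)\bigr)$. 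This already yields a relation of the form $\dot{z}_i - \dot{z}_j = -\lambda' (z_i - z_j)$, where I must be careful about whether the factor appearing is $\lambda$ or $J\lambda$; the statement of the proposition writes $-\lambda(z_i - z_j)$, so I would reconcile the notation by noting that $J$ represents rotation by $\pi/2$ and absorbing it into the meaning of the scalar, or by interpreting $\lambda$ in the proposition as the appropriate (complex/rotational) scaling. This is the first step and requires only differentiation plus the chain rule.

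The converse direction is the substantive part. Assuming that at some fixed time $\dot{z}_i - \dot{z}_j = -\lambda(z_i - z_j)$ holds for all $i,j$, I must integrate this back to recover the full time-dependence $z_i(t) - z_j(t) = e^{-J\lambda t}\bigl(z_i(0)-z_j(0)\bigr)$. The key observation is that the difference $w_{ij}(t) := z_i(t) - z_j(t)$ satisfies, along solutions of the equations of motion, a linear ODE $\dot{w}_{ij} = -J\lambda\, w_{ij}$ \emph{provided the relation is preserved in time}. The honest point here is that one must argue that the instantaneous relation at a single time, when combined with the vortex equations of motion \eqref{em}, forces the relation to persist for all time, so that the relative configuration rotates rigidly. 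I would invoke the structure of the equations of motion: the velocity field is equivariant under rotations and scalings of the configuration, which guarantees that an instantaneous rigidly-rotating velocity field propagates to a genuine rigid rotation. Solving the resulting constant-coefficient linear system then gives the exponential $e^{-J\lambda t}$.

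For the three itemized cases, each follows by specializing the general relation $\dot{z}_j = -\lambda(z_j - c)$ or its differences. For the absolute equilibrium (item 1), setting $\lambda = 0$ forces $\dot{z}_i - \dot{z}_j = 0$, and combined with the appropriate normalization (or the observation that rigid translations are excluded) gives $\dot{z}_j = 0$ for every $j$; conversely $\dot{z}_j = 0$ clearly yields constant positions. For the rigid translation (item 2), $\lambda = 0$ gives $\dot{z}_i - \dot{z}_j = 0$, so all velocities are equal to a common vector $v$; requiring $v \neq 0$ distinguishes it from the absolute equilibrium, and integrating gives $z_j(t) = z_j(0) + tv$. For item 3 with $\lambda \neq 0$, I would derive the existence of the center of rotation $c$ directly: the relation $\dot{z}_i - \dot{z}_j = -\lambda(z_i - z_j)$ is affine in the positions, so the map $z \mapsto \dot{z} + \lambda z$ is constant across all vortices, and this common value equals $\lambda c$, which defines $c = \frac{1}{\lambda}(\dot{z}_j + \lambda z_j)$ independent of $j$; equation \eqref{re} is then immediate.

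The main obstacle I anticipate is the converse integration step, namely justifying rigorously that an instantaneous velocity relation at a single time propagates to the exponential rotation formula for all time. The forward (differentiation) direction and the case analysis are essentially computational, but the converse genuinely uses that the trajectory solves the vortex equations of motion, not merely that the velocity relation holds pointwise; establishing the persistence of the relation is where the real content lies. Since the proposition is attributed to \cite{O}, I would structure the argument to make the equivariance of the velocity field under the relevant symmetry group explicit, as that is the cleanest way to close this gap.
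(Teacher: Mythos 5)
The paper offers no proof of this proposition at all: it is stated with the remark that it ``is proved in \cite{O}'', so there is no in-text argument to compare yours against. Your outline is the standard argument and is essentially sound; the forward direction by differentiation, the identification of $c$ via the observation that $\dot z_j+\lambda z_j$ is independent of $j$, and the $\lambda=0$ dichotomy between items 1 and 2 are all correct.

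Two points in the converse direction need tightening. First, the mechanism that closes the persistence step is not equivariance alone but equivariance \emph{plus uniqueness of solutions of the ODE}: given the instantaneous relation at $t_0$, define the candidate motion $w_j(t)=c+e^{-J\lambda(t-t_0)}\bigl(z_j(t_0)-c\bigr)$, use the rotation-equivariance and translation-invariance of the right-hand side of (\ref{em}) together with the assumed relation at $t_0$ to check that $w$ solves the equations of motion, and then conclude $z\equiv w$ on the interval of existence by uniqueness. Your phrase ``solving the resulting constant-coefficient linear system'' inverts the logic: the linear ODE $\dot w_{ij}=-J\lambda w_{ij}$ is only available \emph{after} persistence is established, so it cannot be the source of the exponential. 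Second, you are right that the $J$ is genuinely missing from the proposition as printed: differentiating $z_i(t)-z_j(t)=e^{-J\lambda t}\bigl(z_i(0)-z_j(0)\bigr)$ produces the factor $-J\lambda$, not $-\lambda$, and the clean fix is to read the proposition's $\lambda$ as shorthand for $J\lambda$ (equivalently, to work with complex positions and an imaginary multiplier, as in O'Neil's formulation). With those two clarifications your proposal is a correct proof sketch.
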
 

\begin{definition} 
The following quantities are defined:
\begin{align*}
\text{Total vorticity} & \; & \Gamma &= \sum \Gamma_{l}\\
\text{Angular momentum} & \; &  L &= \sum_{l<k}\Gamma_{l}\Gamma_{k}\\
\text{Moment of vorticity} & \; &  M &= \sum \Gamma_{l}z_{l}\\
\text{Center of vorticity } & \; &  c &= M/ \Gamma  \;(\text{when} \; \Gamma \neq 0)\\
\text{Moment of inertia} & \; &  I &= \frac{1}{2}\sum \Gamma_{l}\|z_{l}\|^{2}
\end{align*}
\end{definition}

Applying the matrix $\textbf{G}$ to both sides of equation (\ref{re} )  in Proposition (\ref{mer}), and using the equations of motion (\ref{em}) we obtain
\begin{equation}\label{inercia}
\lambda \nabla (I-I_{0}) = \nabla H,
\end{equation}
where $\nabla = (\nabla_{1}, \dots , \nabla_{n})$ and $I = I_{0}$. We observe that equation (\ref{inercia}) is Lagrange multiplier problem, with $\lambda$ as the Lagrange multiplier and any solution can be interpreted as a critical point of the Hamiltonian $H(z)$ under the condition that $I$ remains constant. Using the homogeneity of the functions $H$ and $I$, equation (\ref{inercia}) implies that the angular velocity $\lambda$ in a relative equilibria is given by
\begin{equation}\label{lambda}
\lambda = \frac{-L}{2I}.
\end{equation}  

\subsection{Equations in mutual distances}

We now consider the case of $n=4$ vortices.  Our presentation follows the approach of \cite{Sch}   in describing the work of Dziobek \cite{Dz} for the Newtonian $n-$body problem.  We want to express equation (\ref{inercia}) in terms of the mutual distance variables $r_{ij}$.  In 1900, Dziobek gave the innovative idea of using the mutual distances of the bodies as unknowns in order to formulate the equations of motion.   His work reduces the problem of searching for relative equilibria to the study of systems of non-linear polynomial equations that exhibit many symmetries.

Between four vortices there are six mutual distances, which are not independent if the vortices are planar;  generically they describe a tetrahedron in $\R^{3}$ in place of a configuration in the plane. In order that they describe a planar relative equilibria we need an additional constraint, which is obtained by setting the volume of the tetrahedron equal to zero.  This restriction follows from the Cayley-Menger determinant:

\[
S = \left|
\begin{array}{ccccc}
0 & 1& 1 & 1 & 1 \\
1 & 0 & r_{12}^{2} & r_{13}^{2} & r_{14}^{2} \\
1 & r_{12}^{2} & 0 & r_{23}^{2} & r_{24}^{2} \\
1 & r_{13}^{2} & r_{23}^{2} & 0 & r_{34}^{2} \\
1 & r_{14}^{2} & r_{24}^{2} & r_{34}^{2} & 0 \\
\end{array}
\right|.
\]

Hence, relative equilibria configurations are obtained as critical points of following equation
\begin{equation}\label{cp}
H - \lambda(I - I_{0}) - \frac{\mu}{32} S = 0,
\end{equation}
depending on $\lambda, \mu, r_{12}, \dots , r_{34}$, where $\lambda$ and $\mu$ are Lagrange multipliers.

\smallskip
\noindent
To find $S$ restricted to planar configurations, we use the following important formula
\begin{equation*}
\frac{\partial S}{\partial r_{ij}^{2}} = -32 A_{i}A_{j},
\end{equation*}
where $A_i$ is the oriented area of the triangle $T_i$ whose vertices are all except  the $i-$th vortex.  Setting the gradient of equation (\ref{cp}) equal to zero yields the equations
\[
\frac{\partial H}{\partial r_{ij}^{2}} - \lambda \frac{\partial I}{\partial r_{ij}^{2}} - \frac{\mu}{32}\frac{\partial S}{\partial r_{ij}^{2}} = 0.
\]

If $\Gamma \neq 0$,   $I$ can be written in terms of the mutual distances as 
\[
 I = \frac{1}{2\Gamma} \sum_{i<j}\Gamma_{i}\Gamma_{j} r_{ij}^{2},
\]
respectively, so
\[
\frac{\partial I}{\partial r_{ij}^{2}} = \frac{\Gamma_{i}\Gamma_{j}}{\Gamma}. 
\]

\noindent
Using this, we obtain the following equations for a four-vortex central configuration:
\[
\Gamma_{i}\Gamma_{j}(r_{ij}^{-2} + \lambda ') = \mu A_{i}A_{j},
\]
where $\lambda = \lambda ' \Gamma, I = I_{0}$ and $S = 0$. 
Explicity we have:
\begin{eqnarray}\label{md}
\Gamma_{1}\Gamma_{2}(r_{12}^{-2} + \lambda') = \mu A_{1}A_{2}, & \quad \Gamma_{3}\Gamma_{4}(r_{34}^{-2} + \lambda') = \mu A_{3}A_{4},\nonumber \\
\Gamma_{1}\Gamma_{3}(r_{13}^{-2} + \lambda') = \mu A_{1}A_{3}, & \quad \Gamma_{2}\Gamma_{4}(r_{24}^{-2} + \lambda') = \mu A_{2}A_{4},\\
\Gamma_{1}\Gamma_{4}(r_{14}^{-2} + \lambda') = \mu A_{1}A_{4}, & \quad \Gamma_{2}\Gamma_{3}(r_{23}^{-2} + \lambda') = \mu A_{2}A_{3}.\nonumber
\end{eqnarray}
This yields to the Dziobek equations for vortices \cite{Dz}:
\begin{equation}\label{Dziobek}
(r_{12}^{-2} + \lambda')(r_{34}^{-2} + \lambda') = (r_{13}^{-2} + \lambda')(r_{24}^{-2} + \lambda') = (r_{14}^{-2} + \lambda')(r_{23}^{-2} + \lambda').
\end{equation}
From the above equations we find 
\[
\lambda = \frac{r_{12}^{-2}r_{34}^{-2} - r_{13}^{-2}r_{24}^{-2}}{r_{12}^{-2} + r_{34}^{-2} - r_{13}^{-2} - r_{24}^{-2}} = \frac{r_{13}^{-2}r_{24}^{-2} - r_{14}^{-2}r_{23}^{-2}}{r_{13}^{-2} + r_{24}^{-2} - r_{14}^{-2} - r_{23}^{-2}} = \frac{r_{14}^{-2}r_{23}^{-2} - r_{12}^{-2}r_{34}^{-2}}{r_{14}^{-2} + r_{23}^{-2} - r_{12}^{-2} - r_{34}^{-2}}.
\]

Now, using the different ratios of two vorticities that can be found from the equations in (\ref{md}), we obtain:
\begin{eqnarray}\label{fv}
\frac{\Gamma_{1}A_{2}}{\Gamma_{2}A_{1}} = \frac{\rho_{23}+\lambda '}{\rho_{13} + \lambda '} = \frac{\rho_{24}+\lambda '}{\rho_{14} + \lambda '} = \frac{\rho_{23}-\rho_{24}}{\rho_{13}-\rho_{14}}, \nonumber \\
\frac{\Gamma_{1}A_{3}}{\Gamma_{3}A_{1}} = \frac{\rho_{23}+\lambda '}{\rho_{12} + \lambda '} = \frac{\rho_{34}+\lambda '}{\rho_{14} + \lambda '} = \frac{\rho_{23}-\rho_{34}}{\rho_{12}-\rho_{14}}, \nonumber \\
\frac{\Gamma_{1}A_{4}}{\Gamma_{4}A_{1}} = \frac{\rho_{24}+\lambda '}{\rho_{12} + \lambda '} = \frac{\rho_{34}+\lambda '}{\rho_{13} + \lambda '} = \frac{\rho_{24}-\rho_{34}}{\rho_{12}-\rho_{13}}, \nonumber \\
\frac{\Gamma_{2}A_{3}}{\Gamma_{3}A_{2}} = \frac{\rho_{13}+\lambda '}{\rho_{12} + \lambda '} = \frac{\rho_{34}+\lambda '}{\rho_{24} + \lambda '} = \frac{\rho_{13}-\rho_{34}}{\rho_{12}-\rho_{24}}, \\
\frac{\Gamma_{2}A_{4}}{\Gamma_{4}A_{2}} = \frac{\rho_{14}+\lambda '}{\rho_{12} + \lambda '} = \frac{\rho_{34}+\lambda '}{\rho_{23} + \lambda '} = \frac{\rho_{14}-\rho_{34}}{\rho_{12}-\rho_{23}}, \nonumber \\
\frac{\Gamma_{3}A_{4}}{\Gamma_{4}A_{3}} = \frac{\rho_{14}+\lambda '}{\rho_{13} + \lambda '} = \frac{\rho_{24}+\lambda '}{\rho_{23} + \lambda '} = \frac{\rho_{14}-\rho_{24}}{\rho_{13}-\rho_{23}}, \nonumber
\end{eqnarray}
where $\rho_{ij} = r_{ij}^{2}$, and $\lambda '$ is a constant.

Eliminating $\lambda'$ from equation (\ref{Dziobek}) and factoring we obtain the important relation
\begin{equation}\label{8}
(r_{13}^{2}-r_{12}^{2})(r_{23}^{2}-r_{34}^{2})(r_{24}^{2}-r_{14}^{2}) = (r_{12}^{2}-r_{14}^{2})(r_{24}^{2}-r_{34}^{2})(r_{13}^{2}-r_{23}^{2}).
\end{equation}

\noindent
Assuming that the six mutual distances determine a configuration in the plane, equations (\ref{8}) give a necessary condition for the existence of a four-vortex relative equilibrium.  The corresponding vortex strengths are then found from  equations  (\ref{fv}).

\section{Algebraic Techniques}

In this section we describe  an algebraic technique useful for analyzing solutions to our problem: elimination theory using Gr\"{o}bner bases.

\subsection{Elimination Theory and Gr\"obner Bases}

Elimination theory is the classical name for algorithmic approaches to eliminating some variables between polynomials of several variables.  The linear case would be handled by \textit{Gaussian elimination}.  In the same way, computational techniques for elimination can in practice be based on \textit{Gr\"obner bases} methods.

We mention some elements from elimination theory and the theory of Gr\"obner bases that will prove useful in our analysis.  For more details see \cite{DJD}.

Let $K$ be a field and consider the polynomial ring $K[x_1,...,x_n]$ of polynomials in $n$ variables $x_i$ over $K$. Let $f_1,...f_l$ be $l$ polynomials in $K[x_1,...,x_n]$ and consider the ideal $I = <f_1, . . . , f_l>=<\mathcal{F}>$ generated by these polynomials.

\begin{definition}
An order is a relation on the monomials $>$ such that it is a total order (ie, given two monomials, one is always greater than the other), such that if $x^\alpha>x^\beta$, and $x^\gamma$ any monomial, we have $x^{\alpha+\gamma}>x^{\beta+\gamma}$.   That is, it is preserved by multiplication of monomials. Finally, we want it to be a well-ordering. That is, every non-empty subset has a smallest element. 
\end{definition}

\begin{definition}
Let $I$ an ideal in $K[x_1, \dots , x_n]$.  We call the $k-$th elimination ideal the ideal $I \cap K[x_{k+1}, \dots ,x_n]$ in $K[x_{k+1}, \dots, x_n]$.
\end{definition}

\noindent
Note that if $k=0$, we just get $I$.

\smallskip

\begin{theorem}\textbf{(The Elimination Theorem)}
If $G$ is a Gr\"obner bases for $I$ with respect to lexicographic order with $x_1>x_2> \cdots >x_n$, then for all $0 \leq k \leq n$, we have
\[
G_k = G \cap K[x_k+1, \dots ,x_n]
\]
is a Gr\"obner bases for the $k-$th elimination ideal.
\end{theorem}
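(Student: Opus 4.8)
The plan is to establish the two conditions that define a Gröbner basis: that $G_k \subseteq I_k$, where $I_k = I \cap K[x_{k+1}, \dots, x_n]$ denotes the $k$-th elimination ideal, and that the leading terms of the elements of $G_k$ generate the leading-term ideal of $I_k$. The first condition is immediate: every element of $G$ lies in $I$, and by construction every element of $G_k = G \cap K[x_{k+1}, \dots, x_n]$ lies in $K[x_{k+1}, \dots, x_n]$, so $G_k \subseteq I \cap K[x_{k+1}, \dots, x_n] = I_k$. The real work lies in the second condition, which amounts to showing that for every nonzero $f \in I_k$ there is some $g \in G_k$ whose leading term divides that of $f$.

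First I would fix a nonzero $f \in I_k$ and use that $G$ is already a Gröbner basis for $I$. Since $f \in I$, some $g \in G$ must have $\mathrm{LT}(g)$ dividing $\mathrm{LT}(f)$. Because $f$ involves only the variables $x_{k+1}, \dots, x_n$, its leading term $\mathrm{LT}(f)$ is a monomial in those variables alone, and hence the divisor $\mathrm{LT}(g)$ is also a monomial in $x_{k+1}, \dots, x_n$.

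The crucial step---and the one place where the lexicographic order is genuinely used---is to upgrade this statement about the leading term of $g$ to a statement about all of $g$. I would argue that if $\mathrm{LT}(g)$ involves only $x_{k+1}, \dots, x_n$, then in fact every monomial appearing in $g$ involves only those variables, so $g \in K[x_{k+1}, \dots, x_n]$ and therefore $g \in G_k$. This follows from the defining feature of lex order with $x_1 > \cdots > x_n$: any monomial containing a positive power of one of $x_1, \dots, x_k$ is lexicographically larger than any monomial supported only on $x_{k+1}, \dots, x_n$, since the two exponent vectors first differ at a coordinate of index at most $k$, where the former is positive and the latter is zero. If $g$ had such a monomial it would exceed $\mathrm{LT}(g)$, contradicting maximality. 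This is the only subtle point, and it is precisely the reason the theorem fails for arbitrary monomial orders but succeeds for lex.

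Combining these steps, $\mathrm{LT}(f)$ is divisible by $\mathrm{LT}(g)$ for some $g \in G_k$, which establishes that the leading-term ideal of $I_k$ is generated by the leading terms of $G_k$. Together with $G_k \subseteq I_k$, this is exactly the condition that $G_k$ is a Gröbner basis for $I_k$, and the argument applies uniformly for each $k$ with $0 \le k \le n$. I expect the verification that \emph{all} monomials of $g$ lie in the later variables to be the heart of the matter; the remaining inclusions are routine bookkeeping.
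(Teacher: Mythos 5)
Your proof is correct. The paper itself states this theorem without proof, citing Cox--Little--O'Shea for details, and your argument is precisely the standard one from that reference: the only nontrivial point is indeed that under lex order with $x_1 > \cdots > x_n$, a leading term supported on $x_{k+1},\dots,x_n$ forces the entire polynomial to lie in $K[x_{k+1},\dots,x_n]$, and you handle it correctly.
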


\section{Special Cases}

In this section we will study the \textit{equilibria} and \textit{rigid translation}.

A necessary condition on the vorticities for the existence of rigidly translating solutions is that $\Gamma = \sum \Gamma_{i} = 0$.  O'Neil  proved that for almost every such choice of vortex strengths, there are exactly $(n-1)!$ rigidly translating configurations \cite{O}.  He showed that for almost every choice of vortex strengths satisfying the necessary conditions $L = \sum \Gamma_{i}\Gamma_{j} =0$, there are $(n-2)!$ equilibria.

\noindent
In the four- vortex case, the equilibria can be found explicitly.  Since the equations are invariant under translations, set $z_{3} = (1, 0)$, and $z_{4} = (0,0)$.  The solutions for $(z_{1}, z_{2})$ are
\[
z_{1} = \frac{(2\Gamma_{4} + \Gamma_{2}, \pm  \Gamma_{2}\sqrt{3})}{2(\Gamma_{2} + \Gamma_{3} + \Gamma_{4})}, \quad z_{2} = \frac{(2\Gamma_{4} + \Gamma_{1}, \mp  \Gamma_{1}\sqrt{3})}{2(\Gamma_{1} + \Gamma_{3} + \Gamma_{4})}.
\] 

In our case, we suppose $\Gamma_{1} = \Gamma_{2} = \Gamma_{3} = 1$, $L=0$ when $\Gamma_{4} = -1$.  In this case, two solutions $(z_{1}, z_{2}, z_{3}, z_{4}) $ of relative equilibria are

\[
\left(\frac{(- 1, \sqrt{3})}{2}, \frac{(- 1, -\sqrt{3})}{2}, 1, 0\right), \quad \left(\frac{(- 1, - \sqrt{3})}{2}, \frac{(-1,  \sqrt{3})}{2}, 1, 0\right).
\]
\noindent
The configuration is an equilateral triangle with $\Gamma_{4}$ in the convex hull formed by the other vorticities.

\medskip

Now, for rigid translations we have that $\Gamma = 0$ when $\Gamma_{3} = -3$.  Following \cite{HM}, for relative equilibria we use the equations
\[
S_{1} = S_{2} = S_{3} = S_{4} = s_{0},
\]
and
\begin{equation}\label{17}
\frac{1}{s_{12}} + \frac{1}{s_{34}} = \frac{1}{s_{13}} + \frac{1}{s_{24}} = \frac{1}{s_{14}} + \frac{1}{s_{23}},
\end{equation}
where $S_{i} - \Gamma_{j} r_{ij}^{2}+\Gamma_{k}r_{ik}^{2} +\Gamma_{l} r_{il}^{2}$ and $I = \sum_{i=1}^{n}\|z_{i}-c\|^{2} = s_{0}$.  We clear denominators in the equations (\ref{17}) to get a polynomial system.  There are two types of symmetric relative equilibria.  The first is the equilateral triangle with vortex 4 at its center.  The second type is a concave kite with the three equal vorticities on the exterior isosceles triangle.

\section{Collinear Relative Equilibria}

Collinear relative equilibria of the four-vortex problem can be studied directly from equation (\ref{re}) since in this case it reduces to
 \begin{equation}\label{colineal}
 \lambda(x_{i} - c) = \sum_{i\neq j} \frac{\Gamma_{i}}{x_{i} - x_{j}},
 \end{equation}
where $c \in \R$.  Clearing denominators from these equations yields a polynomial system.  Rather than fixing $\lambda$ or $c$, we use the homogeneity and translation invariance of the system and set $x_{3} = -1, x_{4} = 1, \Gamma_{1} = \Gamma_{2} = \Gamma_{3} = 1$ and $\Gamma_{4}$ will be treated as a parameter. 

\medskip

Using this approach we get the following results:  

\subsection{Symmetric Solutions}

Given our setup, symmetric configurations correspond to solutions where $x_1=-x_2$.  In this case the center of vorticity $c$ is located at the origin.  Substituting these values in (\ref{colineal}) we get the following equation system:
\begin{eqnarray*}
  -2\lambda x_2^4+(2\lambda + 2\Gamma_4 +3)x_2^2 + 2(1-\Gamma_4)x_2 &=& 1,\\
  2\lambda x_2^4-(2\lambda + 2\Gamma_4 +3)x_2^2 + 2(1-\Gamma_4)x_2 &=& -1,\\
  (\Gamma_4-2\lambda)x_2^2 &=& \Gamma_4 +4 -2\lambda,\\
  (2\lambda-1)x_2^2 &=&2\lambda-5.
\end{eqnarray*}

\noindent
Solving the system we get that the only solutions are possible when all vorticities are equals
\[
-2\lambda x_2^4 + (2\lambda+2\Gamma_4+3)x_2^2-1 = 0, \quad \Gamma_4=1, \; \text{and} \; \lambda = \frac{x_2^2-5}{x_2^2-1}.
\]
Using $y=x_2^{2}$ we have four real solutions for $(x_1, x_2)=$
\[
 (-\sqrt{3} \mp \sqrt{2}, \sqrt{3} \pm \sqrt{2}) \quad \text{and} \quad (\sqrt{3} \pm \sqrt{2}, -\sqrt{3} \mp \sqrt{2}).
 \]

  \subsection{Asymmetric solutions}
 
 To locate any asymmetric solutions, we introduce the variables $u$ and $v$ along with the equations
 \[
 u(x_1+x_2)-1 \quad \text{and} \quad v(x_1-x_2)-1.
 \]
 
 \noindent
 Adding these two equations to the original polynomial system obtained from (\ref{colineal}), we compute a Gr\"obner basis $G_{col}$ with respect to the lex order where $c>\lambda>u>v>x_1>x_2>\Gamma_4$.  We get a 12th-degree polynomial in $x_2$ with coefficients in $\Gamma_4$.

\begin{equation*}
 \begin{split}
p(x_{2}) &= (13\Gamma^{4}+32\Gamma^{3}+2\Gamma^{5}+4+20
\Gamma +37\Gamma^{2}) {x_2}^{12}+ (-32\,{\Gamma }^{4}+20\,{\Gamma }^{2}+38\,\Gamma \\
& \quad -30{\Gamma }^{3} +12-8\,{\Gamma }^{5}) { x_2}^{11}- (1836\,{\Gamma }^{2}+312+
1250\,\Gamma +1204\,{\Gamma }^{3}+338\,{\Gamma }^{4}\\
& \quad +28\,{\Gamma }^{5}) {x_2}^{10} + (664\,{\Gamma }^{4}-1346\,\Gamma +88\,
{\Gamma }^{5}+1234\,{\Gamma }^{3}+100\,{\Gamma }^{2}-740)  {x_2}^{9}\\
& \quad + (3007\,{\Gamma }^{4}+13688\,{\Gamma }^{3} +26937\,{\Gamma }^{2}+254\,{\Gamma }^{5}+23290\,\Gamma +7020) {x_2}^{8}\\
& \quad + (-8636\,{\Gamma }^{3}-272\,{\Gamma }^{5}-5640\,{\Gamma }^{2}+8492\,\Gamma +8712-2656\,{\Gamma }^{4}) {x_2}^{7}\\
& \quad + (1288\,{\Gamma }^{5}+62688+156484\,\Gamma +145312\,{\Gamma }^{2} -14092\,{\Gamma }^{4}-62936\,{\Gamma }^{3}) {x_2}^{6}\\
& \quad+(-6092\,{\Gamma }^{3}+6476\,\Gamma -656\,{\Gamma }^{5}+9528\,{\Gamma }^{2}-4144\,{\Gamma }^{4} -5112) {x_2}^{5} \\
& \quad + (114080\,{\Gamma }^{3}+261207\,{\Gamma }^{2}+334552\,\Gamma +24859\,{\Gamma }^{4}+2078\,{\Gamma }^{5}+166860) {x_2}^{4}\\
& \quad+(-93138\,\Gamma +3864\,{\Gamma }^{5}-112340+91162\,{\Gamma }^{3}+32192\,{\Gamma }^{4}+78260\,{\Gamma }^{2}) {x_2}^{3}\\
& \quad + (2916\,{\Gamma }^{5}+13886\,{\Gamma }^{4}+9048+2348\,{\Gamma }^{3}-66484\,{\Gamma }^{2}-49626\,\Gamma) {x_2}^{2}\\
& \quad +(-12102\,{\Gamma }^{3} +7068+1080\,{\Gamma }^{5}+600\,{\Gamma }^{4}+9846\,\Gamma -6492\,{\Gamma }^{2}) {x_2}\\
& \quad-1148+162\,{\Gamma }^{5}+1050\,\Gamma -472\,{\Gamma }^{3}+1227\,{\Gamma }^{2}-711\,{\Gamma }^{4}.
 \end{split}
\end{equation*}

\begin{theorem}
Let the 4-vortex problem with $\Gamma_{1} = \Gamma_{2} = \Gamma_{3} = 1$ and $\Gamma_{4}$ taken as a parameter.  In addition, we fix $x_{3} = -1$ y $x_{4} = 1$.  There are collinear relative equilibria for every $\Gamma_{4} \in (-1, +\infty)$:
\begin{itemize}
\item If $\Gamma_{4} \in (-1, -1/2)$, we have six relative equilibria configurations.
\item If $\Gamma_{4} = -1/2$, we have seven collinear solutions.
\item If $\Gamma_{4} \in (-1/2, 1)$, there are twelve collinear solutions. 
\item If $\Gamma_4 = 1$, there are eight collinear solutions.
\item If $\Gamma_4 \in (1, +\infty)$, there are twelve collinear solutions.
\end{itemize}
\end{theorem}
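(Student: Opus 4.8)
The plan is to reduce the count of collinear relative equilibria to a count of real, admissible roots of the explicit polynomial $p(x_2)$, handling the symmetric family separately. By the Elimination Theorem applied to the Gr\"obner basis $G_{col}$ (in the lex order $c>\lambda>u>v>x_1>x_2>\Gamma_4$), the basis expresses $c,\lambda,u,v,x_1$ in terms of $x_2$ and $\Gamma_4$ along the variety, so that each admissible real root $x_2$ of $p$ lifts to at most one collinear configuration; since $\Gamma_1=\Gamma_2$, the solution obtained by swapping labels $1\leftrightarrow2$ appears as a separate root. The auxiliary equations $u(x_1+x_2)=1$ and $v(x_1-x_2)=1$ guarantee that the roots of $p$ capture precisely the \emph{asymmetric}, collision-free solutions, since they force $x_1+x_2\neq0$ and $x_1\neq x_2$. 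The symmetric solutions $x_1=-x_2$ were already shown to exist only when $\Gamma_4=1$, where they contribute four configurations. Thus the total count equals the number of admissible real roots of $p$, plus four when $\Gamma_4=1$ and zero otherwise, and the whole problem becomes a parametric real-root count for $p(x_2,\Gamma_4)$ on $\Gamma_4\in(-1,\infty)$.

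First I would locate every parameter value at which the number of admissible real roots can change. Three mechanisms arise. (i) The \emph{leading coefficient} $\ell(\Gamma_4)=2\Gamma_4^5+13\Gamma_4^4+32\Gamma_4^3+37\Gamma_4^2+20\Gamma_4+4$ can vanish, sending a root to infinity; a direct factorization gives
\[
\ell(\Gamma_4)=(2\Gamma_4+1)(\Gamma_4+1)^2(\Gamma_4+2)^2,
\]
so in the closure of our range this happens only at $\Gamma_4=-1/2$ and at the excluded endpoint $\Gamma_4=-1$ (where, consistently, $L=0$ and the motion degenerates to an equilibrium). (ii) The \emph{discriminant} $D(\Gamma_4)=\mathrm{disc}_{x_2}(p)$ can vanish, signalling a real double root, i.e. either a complex pair turning real or two configurations merging. (iii) A root can reach a \emph{collision} value, $p(\pm1,\Gamma_4)=0$, or cross the symmetric locus $x_1+x_2=0$. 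The key claim to establish is that, inside $(-1,\infty)$, the only values produced by these mechanisms that actually change the count are $\Gamma_4=-1/2$ and $\Gamma_4=1$; in particular $D(\Gamma_4)$ must be shown to have no count-changing real zeros in the open intervals $(-1/2,1)$ and $(1,\infty)$.

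With the candidate bifurcation set $\{-1/2,1\}$ in hand, I would certify the count on each subinterval by a Sturm sequence (or an equivalent certified real-root isolation) for $p$ at one sample value of $\Gamma_4$ per interval, obtaining $6$, $12$, and $12$ on $(-1,-1/2)$, $(-1/2,1)$, and $(1,\infty)$ respectively, after discarding any roots landing on a collision or symmetric value. The two boundary values need dedicated local analysis. At $\Gamma_4=1$ the discriminant vanishes because eight of the twelve asymmetric roots coalesce, in pairs, onto the four symmetric configurations $x_2=\pm(\sqrt3\pm\sqrt2)$; these four then appear as double roots of $p$ and are counted once each in the symmetric family, leaving four genuinely asymmetric roots, for the total $4+4=8$. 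At $\Gamma_4=-1/2$ the leading coefficient vanishes while the discriminant also degenerates, so one must track the escaping root together with the real/complex transitions; a Puiseux-type expansion of the roots near $\Gamma_4=-1/2$ (or certified counts at $-1/2\pm\varepsilon$) yields the transitional value $7$.

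The main obstacle is step (ii): the discriminant of a degree-$12$ polynomial whose coefficients are quintics in $\Gamma_4$ is an enormous univariate polynomial, and proving that its only count-changing real zeros in $(-1,\infty)$ are $-1/2$ and $1$---rather than merely isolating them numerically---is the delicate part. I expect this to require exact factorization of $D(\Gamma_4)$, splitting off the spurious factors coming from $\ell(\Gamma_4)$, from collisions, and from complex-root coincidences, followed by a sign analysis certifying that the real-root count of $p$ is constant between consecutive candidate values. The accounting at $\Gamma_4=1$, where the asymmetric and symmetric branches meet and one must avoid double-counting the merged configurations, is the second subtle point.
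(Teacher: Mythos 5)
Your overall strategy is the same as the paper's: treat the symmetric family $x_1=-x_2$ separately (it exists only at $\Gamma_4=1$ and contributes the four configurations $\pm(\sqrt3\pm\sqrt2)$), reduce the asymmetric count to the real roots of the degree-$12$ eliminant $p(x_2,\Gamma_4)$ coming from the Gr\"obner basis $G_{col}$, and track that root count as $\Gamma_4$ varies. Where the paper simply invokes \texttt{Mathematica}, Descartes' rule and Sturm's theorem to report the counts $6$, $7$, $12$ on $(-1,-1/2)$, $\{-1/2\}$ and $(-1/2,+\infty)$, you supply the scaffolding that would make this rigorous: your factorization $\ell(\Gamma_4)=(2\Gamma_4+1)(\Gamma_4+1)^2(\Gamma_4+2)^2$ of the leading coefficient is correct (one can check that the $x_2^{11}$ coefficient also vanishes at $\Gamma_4=-1/2$, so the degree in fact drops by two there), and isolating the count-changing zeros of $\mathrm{disc}_{x_2}(p)$ is exactly what is needed to upgrade the paper's numerical sampling to a proof. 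Up to that point your proposal is the paper's argument done more carefully.

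The genuine gap is your treatment of $\Gamma_4=1$. You assert, without justification, that eight of the twelve roots coalesce in pairs onto the four symmetric values, leaving four simple asymmetric roots and a total of $4+4=8$ configurations. That picture is almost certainly wrong: for four equal vorticities a Moulton-type count gives $4!/2=12$ collinear configurations normalized by $x_3=-1$, $x_4=1$, exactly four of which are symmetric, so eight \emph{asymmetric} solutions must survive at $\Gamma_4=1$; this is also the count ($12$ collinear classes at $\Gamma_4=1$) that the paper's later Corollary needs to reach the total of $34$. The consistent reading of the theorem's ``eight'' is that four of the twelve roots of $p(\cdot,1)$ land as \emph{spurious} roots on the symmetric locus excluded by $u(x_1+x_2)=1$, leaving eight genuine asymmetric roots --- not four double roots plus four simple ones. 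Your argument for this item therefore rests on an unverified multiplicity claim that contradicts the rest of the paper; it must be replaced by an explicit factorization of $p(x_2,1)$ together with a check, via the full basis, of which roots lift to solutions with $x_1+x_2\neq 0$. (In fairness, the paper's own proof never addresses $\Gamma_4=1$ either: it asserts twelve roots on all of $(-1/2,+\infty]$, which is inconsistent with the theorem's fourth item.) A smaller point: your statement that each root of $p$ lifts to at most one configuration does not follow from the Elimination Theorem alone; it requires the basis to contain a polynomial expressing $x_1$ rationally in $x_2$ and $\Gamma_4$, which should be verified rather than assumed.
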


Using \texttt{Mathematica}, we can find the $\Gamma_4$ values for which we have changes of sign in the polynomial $p(x_2)$.  By Descartes' rule of signs, we can find how  many real roots can take $p(x_2)$ and $p(-x_2)$.  Numerically we can use \textsl{Sturm}'s theorem to count the exact number of real roots, so that we obtain:  For $\Gamma \in [-\infty, -1]$, there are no real solutions other than the degenerate $x_{2} = \pm 1$ (i.e., the second vortex coincides with the fourth vortex).
For $\Gamma \in (-1, -1/2)$,  there are six different roots for $p(x_{2})$.  For $\Gamma = -1/2$ there are seven different roots, and when $\Gamma \in (-1/2, +\infty]$, $p(x_{2})$ has twelve roots.  In Fig. \ref{collineal} we can see all the collinear relative equilibria for  $\Gamma_{4} = 1/2$.  

 \begin{figure}[h] %\vspace{-1cm} 
\centering
      \includegraphics[scale=0.4]{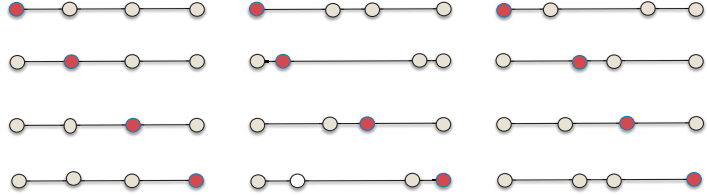}
      \vspace{-5 cm}
\caption{\label{collineal}{\small Set of collinear solutions for $\Gamma_{4} = 1/2$.  Vortices $\Gamma_{1}=\Gamma_{2}=\Gamma_{3}=1$ are detonated by white disks and vortex $\Gamma_{4}$ by dark one.  In the first line we can see the same order for the three configurations but  the  distance among the vortices is different. The same applies to the following three lines}}
\end{figure}

 \section{Symmetric Strictly Planar Relative Equilibria}
 
 In this section we investigate all possible symmetric planar relative equilibria in the four-vortex problem with $\Gamma_1=\Gamma_2=\Gamma_3=1$.  The two possible configurations are a concave kite, and a convex kite.  We use the techniques used in \cite{EJ} for the 4 body problem.
 
 \subsection{The kite family}
 
 We say that a planar relative equilibria has a kite shape if it has an axis of symmetry passing through two of the vorticities.  The kite configuration is \textit{convex} if none of the vortices is located in the interior of the convex hull of the other three, otherwise, if the configuration is not collinear, we say that the kite configuration is \textit{concave}.
 
\begin{figure}[h]
 \centering
 \includegraphics[scale=0.3]{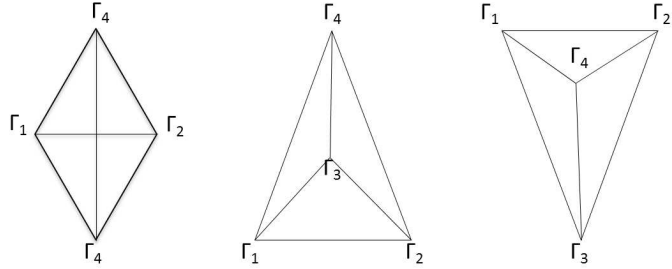}
 %\vspace{-6cm}
 \caption{\label{cometa}{\small Kite relative equilibria: (a) $k,l>0$ (left), \; (b) $k<0, l>0$ (center), \;  (c) $k>0, l<0$ (right).}}\label{Kite}
 \end{figure}

In order to study these kite central configurations we choose the axes of coordinates with origin at the center of mass of $\Gamma_3$ and $\Gamma_4$, the $y-$axis as the axis of symmetry, and the $x-$axis orthogonal to it. Taking conveniently the unity of length, we can suppose that the positions of the masses $\Gamma_1, \Gamma_2,
\Gamma_3$ and $\Gamma_4$ are $(-1, 0), (1, 0), (0, -k)$ and $(0, l)$ respectively, and that always $\Gamma_4$ is over $\Gamma_3$ on the $y-$axis; i.e. $k + l > 0$. See Figure \ref{Kite}. 

Using the symmetries of the kite configuration,  Dziobek's equations reduce to the following two equations
\begin{eqnarray}\label{convex}
\Gamma_{4}(k+l)[(1+l^{2})^{-1} - (k+l)^{-2}] + 2k[4^{-1} - (1+k^{2})^{-1}] & = & 0, \nonumber \\
f(k, l) = (k+l)[(1+k^{2})^{-1} - (k+l)^{-2} ] + 2l[4^{-1} - (1+l^{2})^{-1} ] & = & 0,
\end{eqnarray}

\noindent
Solutions $(k, l)$ of equations (\ref{convex}) with $k, l \in \R$, and $k + l > 0$ provide the planar non-collinear relative equilibria of the 4-vortex problem for the vorticities $\Gamma_4 \in \R$ and $\Gamma_{1} = \Gamma_{2} = \Gamma_{3} = 1$.   From the first equation of (\ref{convex}), $k$ cannot be zero; and from the second one $l$ also cannot be zero.

\begin{proposition}\label{equilateral}
The equilateral triangle, with the three vorticities equal to 1
on its vertices and the vorticity $\Gamma_4$ at its barycenter, always is a relative equilibria of the planar 4-vortex problem.
\end{proposition}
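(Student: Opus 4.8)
The plan is to write down the explicit values of $k$ and $l$ that encode the equilateral triangle with $\Gamma_4$ at its barycenter, and then check directly that they solve the reduced system (\ref{convex}). Since, as stated just above Proposition \ref{equilateral}, the solutions of (\ref{convex}) with $k+l>0$ are precisely the planar non-collinear relative equilibria, this verification is all that is required; I do not need to revisit the derivation of (\ref{convex}) from (\ref{md})--(\ref{fv}).

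First I would fix the geometry. With the three unit vorticities at $(-1,0)$, $(1,0)$ and $(0,-k)$ one has $r_{12}^2=4$ and $r_{13}^2=r_{23}^2=1+k^2$, so the triangle is equilateral precisely when $1+k^2=4$, i.e.\ $k=\sqrt{3}$. Its barycenter is then $(0,-k/3)=(0,-1/\sqrt{3})$, so placing $\Gamma_4$ there amounts to taking $l=-1/\sqrt{3}$. One checks the admissibility constraint $k+l=2/\sqrt{3}>0$, and that this configuration falls in case (c) of Figure \ref{Kite} ($k>0$, $l<0$, with $\Gamma_4$ interior to the triangle).

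Next I would substitute $k=\sqrt{3}$ and $l=-1/\sqrt{3}$ into (\ref{convex}). The decisive identities are $1+k^2=4$ and $1+l^2=(k+l)^2=4/3$, which give $(1+k^2)^{-1}=4^{-1}=1/4$ and $(1+l^2)^{-1}=(k+l)^{-2}=3/4$; geometrically this records that at the barycenter $\Gamma_4$ is the circumcenter, equidistant from the three unit vortices at the circumradius $2/\sqrt{3}$. Feeding these into the first equation of (\ref{convex}) makes \emph{both} bracketed factors $(1+l^2)^{-1}-(k+l)^{-2}$ and $4^{-1}-(1+k^2)^{-1}$ vanish identically, so that equation holds for every value of $\Gamma_4$. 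In the second equation $f(k,l)=0$, both bracketed factors equal $-1/2$, while the prefactors are $k+l=2/\sqrt{3}$ and $2l=-2/\sqrt{3}$, so the two summands are $-1/\sqrt{3}$ and $+1/\sqrt{3}$ and cancel.

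There is no genuine obstacle here: once the correct $k$ and $l$ are identified the argument is a short substitution. The only point worth stressing is the simultaneous vanishing of both factors in the first equation of (\ref{convex}); this structural cancellation is exactly what makes the equilateral-plus-barycenter family a relative equilibrium uniformly in the parameter $\Gamma_4$ rather than at isolated values, and it is what justifies the word ``always'' in the statement.
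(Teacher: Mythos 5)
Your proposal is correct and follows essentially the same route as the paper: identify $(k,l)=(\sqrt{3},-1/\sqrt{3})$ and check it against the reduced system (\ref{convex}). In fact your version is the more complete one --- the paper's one-line proof only extracts $k=\sqrt{3}$ and $l=-1/\sqrt{3}$ and never explicitly verifies $f(k,l)=0$ or notes that both bracketed factors in the first equation of (\ref{convex}) vanish simultaneously, which is precisely the observation that makes the configuration a relative equilibrium for \emph{every} $\Gamma_4$.
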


\medskip

\begin{proof}  Since the triangle is equilateral, we have that $r_{12} = r_{13} = r_{23} = 2$, so $k = \sqrt{3}$, substituting the $k$ value we have that $l = -1/\sqrt{3}$. \qquad
\end{proof} 

The first equation of (\ref{convex}) can be written as
\[
\Gamma_{4} = \Gamma_{4}(k,l) = \frac{2k[(1+k^{2})^{-1}-4^{-1}]}{(k+l)[(1+l^{2})^{-1}-(k+l)^{-2}]} = \frac{k(3-k^{2})(1+l^{2})(k+l)}{2(1+k^{2})(k^{2}+2kl-1)}.
\]
\noindent
then we can write system (\ref{convex}) as the system

\begin{equation}\label{systems}
 \left\{
      \begin{array}{ll}
      \Gamma_{4} = \Gamma_{4}(k, l), \\
      f(k, l) = 0.
      \end{array}
\right.
\end{equation}

\begin{figure}[h]%\vspace{-3cm}
\centering
    \includegraphics[scale=0.5]{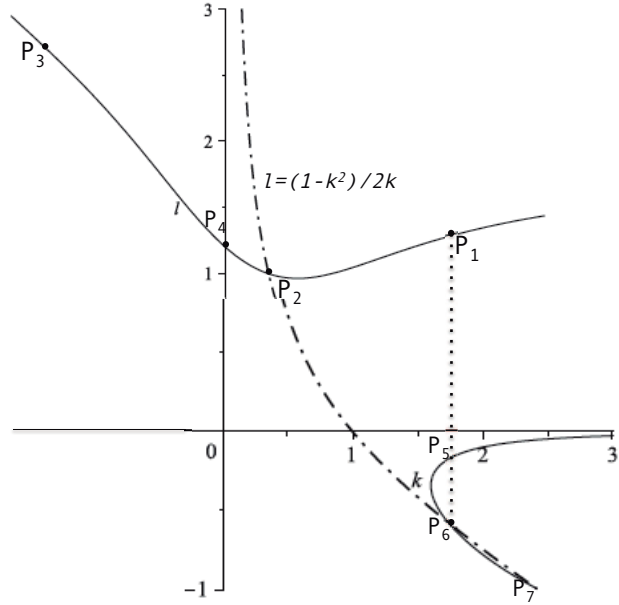}
    %\vspace{-1cm}
   \caption{{\small Arcs of $f=0$ where $\Gamma_{4}(k,l)>0$ and $\Gamma_{4}(k, l)<0$ (black curves). The dashed curve is $l=(1-k^2)/2k$ }}\label{fyk}
   \end{figure}

Using \textit{Maple} we plot the curve $f=0$, from here we see that this curve has only two branches in the region $k+l>0$; one contained in the half-plane $l>0$, and the other one contained in the fourth quadrant $\{(k,l):k>0, l<0\}$; see Figure \ref{fyk}.

 Using the expression of $\Gamma_{4}(k,l)$ it follows that
 \[
 \text{sign}(\Gamma_{4}) = \text{sign}\left(\frac{k(3-k^{2})}{k^{2}+2kl-1}\right).
 \]

We note that $k(3-k^{2})$ is positive in $(-\infty, -\sqrt{3})  \cup (0, \sqrt{3})$; is zero in $\{-\sqrt{3}, 0, \sqrt{3}\}$; and negative in the complement.   The intersection of the vertical line $k=\sqrt{3}$ with the curve $f=0$ provides the three points:
\[
 P_{1} =(\sqrt{3}, 1\text{.}19175),\quad P_{5} = (\sqrt{3}, -0\text{.}17633), \quad P_{6} = (\sqrt{3}, -1/\sqrt{3}). 
 \]
\noindent 
  While the intersection of $k=0$ with $f=0$ consists of only the point $P_{4} = (0, 1\text{.}2072)$.  Finally, the intersection of $k = -\sqrt{3}$ with the curve $f=0$ provides a unique point $P_{3} = (-\sqrt{3}, 2\text{.}74748)$ in the region $k+l>0$.
  
  \smallskip
  
  The curve $l(k) = \dfrac{1-k^{2}}{2k}$ only has points with $k>0$ in $k+l>0$, this curve is monotone decreasing  and has a unique branch which intersects the curve  $f=0$ at the points (see Figure \ref{fyk}). 
\begin{eqnarray*}
P_{2} & = & (k(l_{1}), l_{1} = 1), \\
P_{6} & = & (\sqrt{3}, -1/\sqrt{3}), \\
P_{7} & = & (k(l_{2}), l_{2} = -1 ),
\end{eqnarray*}

\noindent
Then, $\Gamma_{4}>0$ if:
\begin{itemize}
\item  $l > -1/\sqrt{3}$ and ($-\sqrt{3} < k < 0$ or $-l + \sqrt{1+l^{2}} < k < \sqrt{3}$), or
\item  $l < -1/\sqrt{3}$ and $\sqrt{3} < k< -l + \sqrt{1+l^{2}}$.
\end{itemize}

\noindent
 And $\Gamma_{4}<0$ if:
 \begin{itemize}
 \item  $l >  -1/\sqrt{3}$ y $0 < k < -l + \sqrt{1+l^{2}}$, y 
 \item  $l < -1/\sqrt{3}$ y $0 < k <\sqrt{3}$.
 \end{itemize}   

\bigskip

\begin{theorem}
The planar 4-vortex problem with three masses equal to 1 and the fourth one equal to $\Gamma_{4} \in \R$ has exactly one convex central configuration which is kite.  Moreover, depending on the value of    $\Gamma_{4}$, it has 1, 2, 3 or 4 kite concave relative equilibria.   

\begin{enumerate}
\item For all $\Gamma_{4} \in \R$ the equilateral triangle with the three equal vorticities located on its vertices and the vorticity $\Gamma_{4}$ located at its barycentre is always a kite concave relative equilibrium.  In the remainder of the statements we omit the description of this concave central configuration.

\item If $\Gamma_{4}=0$ there are exactly 2  additional kite concave relative equilibria configurations.  In one of them, $\Gamma_{4}$ is on a vertex of an isosceles triangle and in the other $\Gamma_{4}$ is in its interior.

\item If $0 < \Gamma_{4} < 1$, there are 3 additional kite concave relative equilibria configurations.  In two of them $\Gamma_4$ is on a vertex of an isosceles triangle and in the other $\Gamma_{4}$ is in its interior.

\item  If $\Gamma_{4} = 1$, then in the convex configuration the four vorticities are located at the vertices of a square, and there is exactly 1 additional kite concave configuration where $\Gamma_{4}$ is on a vertex of an equilateral triangle

\item  If $\Gamma_{4} > 1$, there is exactly 1 additional kite concave configuration having $\Gamma_{4}$ in the interior of an isosceles triangle.

\item  If $\Gamma_{4} < 0$, there is exactly 1 additional kite concave configuration having $\Gamma_{4}$ in the interior of an isosceles triangle. 
\end{enumerate}
\end{theorem}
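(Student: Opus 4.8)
The plan is to treat the curve $f(k,l)=0$ in the half-plane $\{k+l>0\}$ as the master locus of all non-collinear kite relative equilibria, and to regard the function $\Gamma_4=\Gamma_4(k,l)$ from (\ref{systems}) as assigning to each admissible shape the unique vorticity that realizes it. With this viewpoint, counting the kite relative equilibria for a fixed value of $\Gamma_4$ amounts to counting the intersections of $f=0$ with the level set $\{\Gamma_4(k,l)=\text{const}\}$, sorted according to whether the shape is convex ($k>0$ and $l>0$) or concave (the remaining sign patterns compatible with $k+l>0$). The equilateral triangle plays a special role: at $P_6=(\sqrt3,-1/\sqrt3)$ both the numerator and the denominator of $\Gamma_4(k,l)$ vanish, so the first equation of (\ref{convex}) reduces to $0=0$ for every value of $\Gamma_4$; this is exactly Proposition \ref{equilateral}, and it accounts for the one concave configuration present for all $\Gamma_4$.

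First I would pin down the global shape of $f=0$ in $\{k+l>0\}$, upgrading the \textit{Maple} picture of Figure \ref{fyk} to rigorous statements: that there are exactly two branches (one in $l>0$, one in the fourth quadrant), that the convex part is a single arc with $k>0,\,l>0$, and that the two concave parts are the sub-arc of the upper branch with $k<0$ and the whole fourth-quadrant branch. This can be done by examining $\partial f/\partial l$ to show that $f=0$ is locally a graph away from finitely many turning points, together with the boundary and asymptotic behaviour of the arcs — their limits as the curve approaches the line $k+l=0$ and as $k\to\pm\infty$, where one finds $l\to\pm\sqrt3$ from $f\to0$.

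For the convex count I would then analyze $\Gamma_4$ restricted to the convex arc. The only place where $\Gamma_4\to\pm\infty$ is where this arc meets the dashed curve $k^2+2kl-1=0$, namely $P_2$, and the zeros of $\Gamma_4$ occur where $k(3-k^2)=0$, i.e. at $P_4$ (the degenerate boundary value $k=0$, which is collinear and hence discarded) and at $P_1$. Tracking the sign chart $\operatorname{sign}(\Gamma_4)=\operatorname{sign}\!\big(k(3-k^2)/(k^2+2kl-1)\big)$ against the monotonicity of $\Gamma_4$ between consecutive special values yields the number of convex preimages of each $\Gamma_4$; in particular $k=l=1$ is the convex configuration at $\Gamma_4=1$, which one checks to be the square. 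The concave count is obtained the same way on the two concave arcs: on the upper arc ($k<0,\,l>0$) one uses the zero at $P_3$ and the interior extremum of $\Gamma_4$ to decide when there are zero, one or two solutions, while on the fourth-quadrant arc one uses the zero at $P_5$, the vertical asymptote at $P_7$ (again on $k^2+2kl-1=0$), and the degenerate point $P_6$. Assembling these contributions, the transitions should fall precisely at $\Gamma_4=0$ (where the boundary zeros $P_3,P_5$ enter) and $\Gamma_4=1$, producing the case distinction in the statement.

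The crux, and the part I expect to be hardest, is making the monotonicity and critical-point analysis of $\Gamma_4$ along $f=0$ fully rigorous, since $f=0$ admits no rational parametrization and $\Gamma_4$ carries both vertical asymptotes (on $k^2+2kl-1=0$) and the indeterminate point $P_6$. I see two routes. The geometric route computes the differential of $\Gamma_4$ constrained to $f=0$ by implicit differentiation and shows its zeros are exactly the claimed turning points; the delicate issue is controlling the asymptotic tails as $k+l\to0^+$ and as $k\to\pm\infty$, where $\Gamma_4$ tends to finite limits (values near $-2$ and $-1$ on the convex and upper concave tails, respectively) that must be computed exactly to exclude stray intersections with a level set. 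The algebraic route, in the spirit of the collinear analysis of Section 5, clears denominators in $f=0$ and $\Gamma_4(k,l)=\Gamma_4$, eliminates $l$ by a resultant (or a Gr\"obner basis in lex order) to obtain a single polynomial in $k$ with $\Gamma_4$-dependent coefficients, and counts its real roots by Sturm's theorem, the bifurcation values $\Gamma_4\in\{0,1\}$ appearing as real roots of its discriminant; here the main difficulty is discarding the spurious factors introduced by clearing denominators — those supported on the asymptote $k^2+2kl-1=0$ and on the degenerate locus through $P_6$ — and attributing each surviving real root to the convex arc or to the correct concave arc.
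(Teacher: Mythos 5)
Your proposal follows essentially the same route as the paper: decompose the locus $f=0$ in $\{k+l>0\}$ into a convex arc and two concave arcs, and count intersections with the level sets of $\Gamma_4(k,l)$ using the zeros of $k(3-k^2)$, the poles on $k^2+2kl-1=0$, and the unique constrained critical point at $\Gamma_4=1$ found by Lagrange multipliers and resultants, with the equilateral triangle appearing as the indeterminate point $P_6$. The only substantive omission is that the paper treats $\Gamma_4=0$ separately by returning to the original equations (\ref{re}), since the reduction to (\ref{systems}) presupposes $\Gamma_4\neq 0$; your plan handles that case as an ordinary level set, which yields the right count but skips that justification (and, incidentally, $P_4$ with $k=0$ is a degenerate concave shape, not a collinear one).
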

\begin{proof}  
After intersecting the $\Gamma_4$ function with $f$, we get the following arcs:
\begin{itemize}
\item [(i)] The open arc $\gamma_1$ going from $P_1$ to $P_2$.  We observe that on this arc $k$ and $l$ are positive, so the corresponding relative equilibria associated to the points $(k,l)$ of this arc are convex.  Since $\Gamma_4(k,l)$  takes the value zero on $P_2$ and takes the value $+\infty$ on  $P_1$, there is at least one convex central configuration for every value of $\Gamma_4>0$.

\item [(ii)] The open arc $\gamma_2$ going from $P_2$ to $P_4$.  Since on this arc $k$ and $l$ are positive, the corresponding relative equilibria associated to points $(k,l)$ of this arc are convex.  Since $\Gamma_4$  takes the value zero on $P_4$ and  the value $-\infty$ on $P_2$, there is at least one convex central configuration for every value of $\Gamma_4<0$.

\item [(iii)] The open arc $\gamma_3$ going from $P_3$ to $P_4$.  Since on this arc $k<0$ and $l>0$, the corresponding relative equilibria associated to points $(k,l)$ of this arc are concave having the vorticity $\Gamma_4$ as a vertex of the triangle formed by the convex hull of the four vorticities. Later, we will show that the $\Gamma_4$ function has a unique critical point on this arc.  Therefore, the $\Gamma_4$ value of this point is $1$.  In this arc, when $k$ approaches $0$, $l$ tends to $+\infty$ . So the $\Gamma_4(k,l)$ curve cannot intercept the $f=0$ curve or cut at one or two points.  As will be seen below, when $\Gamma_4=1$, these curves are located in a single point, i.e., there is an equilateral triangle concave configuration.  When $\Gamma_4$ takes values greater than $1$, the two curves do not intersect, and when $\Gamma_4 <1$, the curves  intersect in two points, on those points we have isosceles triangle concave configurations.

\item[(iv)]  The open arc $\gamma_4$ going from $P_5$ to $P_7$.  Since on this arc $k>0$ and $l<0$, the corresponding relative equilibria associated to points $(k,l)$ of this arc are concave having the vorticity $\Gamma_4$ in the  interior of the triangle formed by the convex hull of the other three vorticities.  As $\Gamma_4(k,l)$ takes the value $0$ on $P_5$  and takes the value $+\infty$ on $P_7$ , there is at least one concave configurations for every $\Gamma_4 \in \R$ value.  We remark that $\Gamma_4(k,l)$ on the point $P_6$ takes the value $\Gamma_4=1$ and this point correspond to equilateral triangle of Proposition \ref{equilateral}.
\end{itemize}

We now discuss the case $\Gamma_4=0$, since Dziobek equations are valid for $\Gamma_4 \neq 0$ this case must be studied directly from the equations (\ref{re}) where the center of vorticity is fixed at $c = -k/3$.  So we have that:
 \begin{eqnarray*}
   \lambda(-1, k/3) &=& \frac{(-3-k^{2}, 2k)}{2(1+k^{2})}, \\
   \lambda(1, k/3) &=& \frac{(3+k^{2}, 2k)}{2(1+k^{2})}, \\
   \lambda(0, -2k/3) &=& \frac{(0, -2k)}{1+k^{2}}, \\
   \lambda(0, l+k/3) &=& \frac{(0, 3l^{2}+2kl+1)}{(1+l^{2})(k+l)}.
   \end{eqnarray*} 

\noindent
From there we have three independent equations
\begin{equation*}
   \lambda = \frac{3+k^{2}}{2(1+k^{2})}, \quad \lambda = \frac{3}{1+k^{2}}, \quad \lambda = \frac{3(3l^{2}+2kl+1)}{(3l+k)(1+l^{2})(k+l)}.
   \end{equation*}
\noindent
Solving this system, we obtain the following $(k,l)$ solutions, which satisfy $k+l>0$:
 \[
   (\sqrt{3}, 1\text{.}19175), \; (\sqrt{3}, -0\text{.}176327), \; (\sqrt{3}, -1/\sqrt{3}) \; \text{y} \; (-\sqrt{3}, 2\text{.}74748),
   \]
and correspond to a  convex configuration, one of this is an isosceles triangle with $\Gamma_4$ in the interior of the convex hull formed by the other three vorticities, the other one is the equilateral triangle given in Proposition \ref{equilateral} and the last one is an isosceles triangle configuration with $\Gamma_4$ as a vertex of the triangle. \qquad
\end{proof}

\medskip

Now we will look for the extremals of the  function $\Gamma_ {4} (k, l)$  along the open arc of the curve $f (k, l) = 0$. These points correspond to the bifurcation values ​​ on the number of relative equilibria. For this, we use the method of Lagrange multipliers.

\smallskip

\begin{proposition}
The function $\Gamma_4(k,l)$ restricted to the arcs $\gamma_1 \cup \gamma_2 \cup \gamma_3 \cup \gamma_4$ has a unique critical point on the arc $\gamma_3$.  The value of $\Gamma_4(k,l)$ at  this point is $\Gamma_4=1$.
\end{proposition}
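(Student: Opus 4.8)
The plan is to locate the critical points of $\Gamma_4(k,l)$ on the level set $f(k,l)=0$ by the method of Lagrange multipliers, and then to decide how many of the resulting points actually lie on the arcs $\gamma_1\cup\gamma_2\cup\gamma_3\cup\gamma_4$. A point of $f=0$ is critical for $\Gamma_4$ precisely when $\nabla\Gamma_4$ is parallel to $\nabla f$, i.e.\ when the Jacobian
\[
J(k,l)=\frac{\partial\Gamma_4}{\partial k}\,\frac{\partial f}{\partial l}-\frac{\partial\Gamma_4}{\partial l}\,\frac{\partial f}{\partial k}
\]
vanishes. Using the explicit rational expression $\Gamma_4(k,l)=N/D$ with $N=k(3-k^2)(1+l^2)(k+l)$ and $D=2(1+k^2)(k^2+2kl-1)$, and clearing denominators in both $J=0$ and $f=0$, I would reduce the search to a polynomial system $P_1(k,l)=0,\ P_2(k,l)=0$ in $\R[k,l]$, where $P_2$ is the numerator of $f$ and $P_1$ is the numerator of $J$.

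To obtain the candidate critical point, I would verify directly that $(k,l)=(-1/\sqrt3,\sqrt3)$ solves the system: a short computation gives $f(-1/\sqrt3,\sqrt3)=0$, $J=0$ there, and $\Gamma_4=1$, the configuration being the equilateral triangle with $\Gamma_4$ at one vertex of the convex hull. A soft argument corroborates existence on $\gamma_3$: the function $\Gamma_4$ is continuous on the closure of $\gamma_3$, vanishes at both endpoints $P_3=(-\sqrt3,\,2.74748)$ (where $k=-\sqrt3$ kills the factor $3-k^2$) and $P_4=(0,1.2072)$ (where $k=0$), and is strictly positive on the open arc, since there $-\sqrt3<k<0$ and $l>-1/\sqrt3$, the region already shown to give $\Gamma_4>0$. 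Hence $\Gamma_4$ attains a positive interior maximum, which must be a critical point.

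The substantive part is uniqueness, and for this I would use the elimination machinery of Section~3. Computing a Gr\"obner basis of the ideal generated by $P_1$ and $P_2$ for a lexicographic order (say $k>l$) and applying the Elimination Theorem yields a univariate polynomial $q(l)$ whose roots contain the second coordinate of every critical point. After factoring $q$, I would isolate its real roots by Sturm sequences, back-substitute to recover the corresponding $k$, and test each candidate against the defining inequalities of the four arcs. The expected outcome is that every real root either fails to satisfy $f=0$ with $k+l>0$, or falls outside all four arcs, leaving only the point $(-1/\sqrt3,\sqrt3)$ on $\gamma_3$. As a consistency check one confirms that $\Gamma_4$ is strictly monotone along $\gamma_1,\gamma_2$ and $\gamma_4$, matching the endpoint values $+\infty\to0$, $-\infty\to0$ and $0\to+\infty$ recorded in the proof of the theorem, so none of these arcs can carry an interior extremum.

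The main obstacle is computational rather than conceptual: after clearing denominators the polynomials $P_1,P_2$ have high degree, comparable to the degree-$12$ polynomial appearing in the collinear analysis, so the Gr\"obner and elimination step must be carried out with a computer algebra system, and the delicate point is to certify that all real roots have been found and correctly sorted onto or off the arcs. Care is also needed with the spurious factors introduced when clearing the denominators $k^2+2kl-1$, $1+k^2$, $1+l^2$ and $k+l$, which must be discarded before counting genuine critical points.
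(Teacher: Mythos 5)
Your proposal follows essentially the same route as the paper's proof: there the authors set up the Lagrangian $F=\Gamma_4+\lambda f$, eliminate the multiplier from $\nabla F=0$ to obtain a polynomial system $h_1(k,l)=f_1(k,l)=0$ (the analogue of your Jacobian condition $J=0$ with denominators and the spurious factors $(1+l^2)(k+l)^4$ cleared), eliminate one variable via the resultant $\mathrm{Res}[h_1,f_1,k]$ rather than a Gr\"obner basis — an immaterial difference — isolate the real roots, and check which candidates lie on the arcs, finding only $(k,l)=(-1/\sqrt{3},\sqrt{3})$ with $\Gamma_4=1$. Your supplementary soft existence argument is sound, and your identification of the critical point as a maximum of $\Gamma_4$ restricted to $\gamma_3$ is actually the conclusion consistent with the solution count in the main theorem (two concave kites with $\Gamma_4$ at a vertex for $0<\Gamma_4<1$, none for $\Gamma_4>1$), even though the paper's proof describes the point as a minimum.
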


\smallskip 

\begin{proof}  Let $F=\Gamma_4(k,l) + \lambda f(k,l)$, where $\lambda$ is a Lagrange multiplier.  We look for the extremals of the function $\Gamma_4(k,l)$ on the arcs $\gamma_1 \cup \gamma_2 \cup \gamma_3 \cup \gamma_4$. We first look for the extremals of this function on the curve $f(k,l)=0$, and after we must choose the extremals on these arcs.  So, we must solve the system
 \begin{equation}\label{gradiente}
   \frac{\partial F}{\partial k} = 0, \quad \frac{\partial F}{\partial l} = 0, \quad f(k,l) = 0,
   \end{equation}
   of three equations in the three variables $k,l$ and $\lambda$.  First, we eliminate the variable $\lambda$ using the first two equations of (\ref{gradiente}), obtaining a system of the form
   \begin{equation}\label{h}
   h(k,l) = 0, \quad f(k,l)=0,
   \end{equation}
   where 
    \[
   h(k,l) = (1+l^{2})(k+l)^{4}h_{1}(k,l), \qquad f(k,l) = f_{1}(k,l) + 3l^{2}(l^{2}-1)-2, 
   \] 
here $h_1$ is a polynomial in the variables $k$ and $l$ of degrees 9 and 6 respectively, and $f_1$ is a polynomial in the variables $k$ and $l$ of degrees 3 and 4 respectively,
\begin{equation*}
\begin{split}
h_{1}(k,l) & =   6+36\,{l}^{4}-24\,{k}^{2}+9\,{l}^{6}-20\,{k}^{4}+24\,{k}^{6}-18\,{k}^{8}+9\,{l}^{2}+30\,{l}^{5}{k}^{5}\\
\quad &+ 304\,{k}^{4}{l}^{4} +84\,{l}^{3}{k}^{5}-3\,{k}^{9}l-50\,{l}^{6}{k}^{4}+{k}^{9}{l}^{5}+84\,{k}^{6}{l}^{6}+32\,{l}^{7}{k}^{5}+200\,{k}^{7}{l}^{3}\\
\quad & +208\,{k}^{6}{l}^{4}+76\,{k}^{7}{l}^{5}-27\,kl+28\,{k}^{8}{l}^{4}+6\,{l}^{3}{k}^{9}+{k}^{8}{l}^{6}+12\,l{k}^{3}+104\,{l}^{3}{k}^{3}\\
\quad & +9\,k{l}^{5}+126\,{k}^{4}{l}^{2}
+36\,{l}^{6}{k}^{2}+54\,k{l}^{3}+36\,{l}^{2}{k}^{2}+49\,{k}^{8}{l}^{2}+252\,{k}^{3}{l}^{5}\\
\quad & -100\,{k}^{7}l+102\,{k}^{5}l-140\,{k}^{6}{l}^{2} ,
\end{split}
\end{equation*}
and
\begin{equation*}
f_{1}(k,l) = kl(k(l^{2}+3)(k+l)+1+5l^{2}).
\end{equation*}

In order to find the solution of system (\ref{h}), we will use the resultant of two polynomials.  The resultant  Res$[h_1, f_1, k]$ si
\[
6144 l(l^2-3)(l^2+1)^12(3l^2-1)^2 r(l).
\]
with $r$  a polynomial in the variable $l$ of degree 20,
\begin{eqnarray*}
r(l) &=& 36\,{l}^{20}+315\,{l}^{18}-
2457\,{l}^{16}-1776\,{l}^{14}+33264\,{l}^{12}-61986\,{l}^{10}+51534\,{
l}^{8}\\
\quad &-&18904\,{l}^{6}+324\,{l}^{4}+1455\,{l}^{2}+243.
\end{eqnarray*}

\noindent

When equalling to zero the resultant of these two polynomials, the only real roots that we get are $l=\pm \sqrt{3}$ and $\pm1/\sqrt{3}$.  For each real root of $l$, we compute numerically the real roots for $f_1$  and check if the point $(k,l)$ is a solution of the system (\ref{gradiente}) with $\lambda \neq 0$.  After that, we have that the only point that is solution to the system (\ref{gradiente}) on $\gamma_1 \cup \gamma_2 \cup \gamma_3 \cup \gamma_4$ with $\lambda \neq 0$ is  
\[
(k, l) = (-1/\sqrt{3}, \sqrt{3}),
\]
which is equivalent to an equilateral triangle configuration, i.e., at this point we have a bifurcation since the number of relative equilibria changes.  Computing the Hessian of $\Gamma_4(k,l)$ we get that this point is a minimum on the arc $\gamma_3$, and the $\Gamma_4(k, l)$ value at this point is $1$.  \qquad
\end{proof}

\bigskip

Counting the different positions for the three equal vortices in configuration of relative equilibria described previously and adding the 4 classes of  collinear relative equilibria we obtain the number of planar relative equilibria in kite configuration. The result is summarizing as:

\medskip

\begin{corollary}
The planar 4-vortex problem with three vorticities equal to 1 and the fourth one equal to $\Gamma_4$ has the following classes of kite relative equilibria:
\begin{itemize}
\item[1.] 26 \, if \, $\Gamma_4=0$, 
\item[2.] 29 \, if \, $0<\Gamma_4<1$,
\item[3.] 34 \, if \, $\Gamma_4=1$,
\item[4.] 23 \, if \, $\Gamma_4>1$,
\item[5.] 20 \, if \, $-1/2<\Gamma_4<0$,
\item[6.] 15 \, if \, $\Gamma_4 = -1/2$,
\item[7.] 14 \, if \, $-1 <\Gamma_4<-1/2$,
\item[8.] 8 \, if \, $\Gamma_4 < -1$.
\end{itemize}
\end{corollary}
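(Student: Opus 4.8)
The plan is to obtain each of the eight totals as the sum of two contributions that have already been produced: the collinear relative equilibria counted by the Theorem on collinear configurations, and the planar non-collinear (kite) relative equilibria counted by the Theorem on symmetric strictly planar configurations. The only genuinely new work is to pass from the \emph{geometric} classes delivered by those two theorems to the number of \emph{labeled} relative equilibria, counted up to rotation and scaling with the three equal vorticities regarded as distinguishable. I would organize the argument as a table indexed by the eight ranges of $\Gamma_4$, fill in a collinear column and a kite column, and add them.

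First I would record the collinear column directly from the collinear Theorem: $0$ for $\Gamma_4<-1$; then $6$ on $(-1,-1/2)$, $7$ at $\Gamma_4=-1/2$, $12$ on $(-1/2,1)$, the value $8$ at the coincidence $\Gamma_4=1$, and $12$ on $(1,+\infty)$. Next, for the kite column, I would take each shape from the kite Theorem — the unique convex kite; the equilateral triangle with $\Gamma_4$ at the barycentre, present for all $\Gamma_4$; and the concave kites with $\Gamma_4$ at a vertex or in the interior, whose number is governed by the arcs $\gamma_1,\dots,\gamma_4$ and the critical-point analysis (a unique critical value $\Gamma_4=1$ on $\gamma_3$) — and multiply it by the number of distinct ways of assigning the labels $1,2,3$ to the three equal positions, divided by the proper rotational symmetries of that configuration. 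A configuration whose only symmetry is the reflection across the kite axis contributes the full set of such labelings, whereas the equilateral-with-centre configuration carries a threefold rotational symmetry that collapses them; and the totally symmetric shapes at $\Gamma_4=1$ (the square, the equilateral triangle with a vortex at the centre, and the equilateral triangle with a vortex on a vertex) must be counted with their enlarged symmetry groups, because there the distinguished vortex is no longer distinguished — this is exactly what makes the total jump to $34$.

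The delicate points, which I would treat case by case, are the bifurcation and coincidence values. At $\Gamma_4=0$ the vortex $\Gamma_4$ becomes passive, so (as the kite Theorem already does) the admissible positions of $\Gamma_4$ on the symmetry axes must be re-enumerated directly from equation~(\ref{re}) rather than from the formula for $\Gamma_4(k,l)$; I would check that this reproduces exactly the four shapes found there and hence the value $26$. At $\Gamma_4=1$ I would count the square, the two equilateral-based shapes, and the collinear family each with its full symmetry group. At $\Gamma_4=-1/2$ and $\Gamma_4=-1$ the jumps come entirely from the collinear column — a collinear solution appears at $-1/2$ and the collinear family disappears below $-1$ — so there I only need to carry the Sturm/Descartes count of the collinear Theorem through the sum. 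Summing the two columns in each range then yields $8,14,15,20,26,29,34,23$ in the stated order.

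The main obstacle is precisely this multiplicity bookkeeping: determining, for every shape in every range, how many labeled relative equilibria it represents, and ensuring that no configuration is double counted where two families coalesce — the two vertex-type concave kites merging into the equilateral at $\Gamma_4=1$, and the interior and vertex families limiting onto the equilateral. Getting these symmetry factors right, in particular the way the labeled counts of the convex and interior concave kites behave as $\Gamma_4$ changes sign and the way the symmetry is enhanced at the fully symmetric value $\Gamma_4=1$, is what forces the eight totals to come out as claimed; once the per-shape multiplicities are fixed, the remainder is addition.
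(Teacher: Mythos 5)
Your plan is essentially the paper's own proof: the paper's argument is exactly the two-column table you describe, listing for each range of $\Gamma_4$ the number of collinear classes and the number of convex/concave kite classes (each geometric shape weighted by the number of inequivalent labelings of the three equal vortices modulo the configuration's symmetries) and adding them; e.g.\ at $\Gamma_4=1$ it counts $6$ squares, $8$ equilateral-with-interior-vortex, $8$ equilateral-with-exterior-vortex, plus the collinear classes. So the decomposition and the multiplicity bookkeeping you propose are the right, and the same, ingredients.

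One concrete warning about executing it: your collinear column entry at $\Gamma_4=1$ is $8$, taken literally from the collinear theorem, but $8+6+8+8=30\neq 34$; the paper's proof of the corollary actually uses $12$ collinear classes at $\Gamma_4=1$ (consistent with the statement, in the discussion following that theorem, that $p(x_2)$ has twelve roots for all $\Gamma_4\in(-1/2,+\infty)$, and inconsistent with the theorem's own ``eight solutions at $\Gamma_4=1$''). So before summing you must resolve which collinear count at $\Gamma_4=1$ is the correct equivalence-class count under your labeling convention; as written, your table would return $30$ rather than $34$ in item 3. A similar, smaller bookkeeping check is needed on the negative side: to reach $15$ at $\Gamma_4=-1/2$ and $14$ on $(-1,-1/2)$ the equilateral-with-centre shape must contribute $2$ labeled classes, not $3$ (the paper's text says ``3'' there but its totals require $2$). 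Apart from pinning down these per-shape multiplicities at the coincidence values, your argument is the paper's.
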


\begin{proof}
The idea is to count the number and type of solutions (equivalence classes) for different values of $\Gamma_4$.

When all vorticities are equals  ($\Gamma_4 =1$) we have $6$ square configurations, $8$ configurations of equilateral triangle where $\Gamma_4$ is an interior vortex (that is $\Gamma_4$ is in the interior of the convex hull of the other three), 8 configurations of equilateral triangle where $\Gamma_4$ is an exterior vortex (it is located at one vortex of the equilateral triangle) and 12 collinear solutions.

\noindent
If $\Gamma_4 = 0$, we have 12  collinear configurations, 6 convex configurations, 2 configurations of equilateral triangle with interior vortex, 3 isosceles configurations with interior vortex and 3 isosceles configurations with exterior vortex.

\noindent
If $0 < \Gamma_4 < 1$ we have 12 collinear configurations, 6 convex configurations, 2  configurations of equilateral triangle with the different vorticity inside, 6 relative equilibria where $\Gamma_4$ is on a vertex of an isosceles triangle and 3 relative equilibria where $\Gamma_4$ is in the interior of an isosceles triangle.

\noindent
If $\Gamma_4>1$ we have 12 collinear configurations, 6 convex configurations, 2  configurations of equilateral triangle with the different vorticity inside, and 3 relative equilibria where $\Gamma_4$ is in the interior of an isosceles triangle.

\noindent
If $-1/2 < \Gamma_4 < 0$, we have  12 collinear configurations, 6 convex configurations, and 2  configurations of equilateral triangle with the different vorticity inside.

\noindent
If $\Gamma_4=-1/2$, there are 7 collinear relative equilibria, 6 convex relative equilibria, and 3 relative equilibria of equilateral triangle.

\noindent 
If $-1 < \Gamma_4 < -1/2$, there are 6 collinear relative equilibria,  6 convex relative equilibria, and 3 relative equilibria of equilateral triangle.  \qquad
\end{proof}

\subsubsection{Rhombus configuration}

Now, we will study a special case of convex kite configuration, the rhombus.  The special thing about this configuration is that, because of the two lines of symmetry, we can find an explicit expression for the value of the sides which depend on different vorticities.

One of the properties of the rhombus is that all four sides are congruent, so $r_{13} = r_{14} = r_{23} = r_{24}$ where the diagonals satisfy the relation $4r_{13}^2 = r_{12}^2+r_{34}^2$.  We fix the areas orientation as $A_1=A_2=-A_3=-A_4$ (since $A_1+A_2+A_3+A_4$).   Let $x=r_{34}/r_{12}$ be the ratio between the diagonals of the rhombus.  From the first and second equation of (\ref{Dziobek}) we have that $x^2=1$, i.e., the diagonal are equal, so the configuration is a square with all the vorticities equal.  From the third and fifth equation in (\ref{Dziobek}) we have an expression for $\Gamma_4$ given by
\begin{equation}\label{rombo}
\Gamma_4 = \frac{x^2-3}{1-3x^2}.
\end{equation}
$\Gamma_4$ is positive when $1/\sqrt{3} < x < \sqrt{3}$, and negative when $x>\sqrt{3}$ or $0<x<1/\sqrt{3}$  (see Fig. \ref{fyg}).  Solving  equation (\ref{rombo}) for $x^2$, yields two different rhombus families.  One when $\Gamma_4 \in (-1/3, +\infty)$, and the other one when $\Gamma_4<3$.  This can be seen by inverting equation (\ref{rombo}), which yields
\begin{equation}\label{35}
x^{2} = \frac{\Gamma_{4}+3}{3\Gamma_{4} + 1}.
\end{equation}

\begin{figure}
\centering
\includegraphics[scale=0.4]{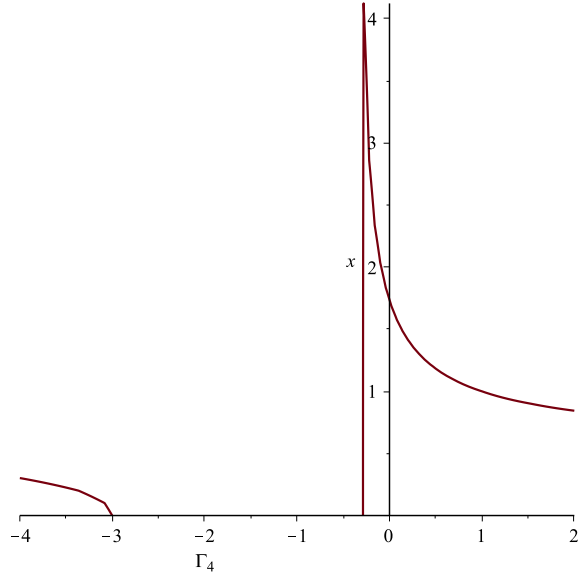}
      \caption{{\small Plot of $x$ vs $\Gamma_{4}$, where $x=r_{34}/r_{12}$ is the ratio between the diagonals of the rhombus}}\label{fyg}
   \end{figure}
   
Since $\lambda = \dfrac{-L}{2I}$, we can calculate the angular velocity ,
\begin{equation}\label{36}
\lambda =- \frac{3(1+\Gamma_{4})}{r_{12}^{2}},
\end{equation}

\noindent
The numerator vanishes at $\Gamma_{4} =  -1$, in this case $x^{2} < 0$, i.e., it is not possible to have absolute equilibria when the vortices are in a rhombus configuration.  The value of the angular velocity $\lambda$ is negative when $\Gamma_{4}  > -1$, and $\lambda$  is positive when $\Gamma_{4} < -1$.  We summarize our conclusions in the following theorem:

\medskip

\begin{theorem}
There are two one-parameter families of rhombus relative equilibria with vortex strengths  $\Gamma_{1} = \Gamma_{2} = \Gamma_{3} = 1$.  The vortices 1 and 2 lie on opposite sides of each other, as do vortices  3 and 4.  The mutual distances are given by

\begin{equation}
\left(\frac{r_{34}}{r_{12}}\right)^{2} = \frac{\Gamma_{4} + 3}{3\Gamma_{4}+1} \qquad \text{and} \qquad \left(\frac{r_{13}}{r_{12}}\right)^{2} = \Gamma_{4} + 1,
\end{equation}
describing two distinct solutions.  For $\Gamma_{4} \in(-\infty, -3)$, we have a solution that has $\lambda >0$.  The other solution is when $\Gamma_{4} \in (-1/3, 0)$ that has $\lambda < 0$.  The case $\Gamma_{4} = 1$ reduces to square.  For $\Gamma_{4} >0$, the larger vortex lies on the shorter diagonal.
\end{theorem}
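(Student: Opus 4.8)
The plan is to use the two reflection symmetries of the rhombus to collapse Dziobek's equations (\ref{Dziobek}) to a single scalar relation, solve it for $\Gamma_4$ as a function of the diagonal ratio $x=r_{34}/r_{12}$, and then extract every assertion of the theorem from that rational expression together with the angular-velocity formula (\ref{lambda}). First I would impose the rhombus constraints $r_{13}=r_{14}=r_{23}=r_{24}$ and $4r_{13}^{2}=r_{12}^{2}+r_{34}^{2}$ and fix the orientation $A_1=A_2=-A_3=-A_4$. Since the four side distances coincide, the second and third products in (\ref{Dziobek}) are automatically equal, so the entire chain reduces to $(r_{12}^{-2}+\lambda')(r_{34}^{-2}+\lambda')=(r_{13}^{-2}+\lambda')^{2}$. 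Feeding this and the diagonal relation into the ratio equations (\ref{fv}) that carry $\Gamma_4$, and eliminating $\lambda'$ and the areas, should produce the closed form $\Gamma_4=(x^{2}-3)/(1-3x^{2})$, i.e. (\ref{rombo}); as a consistency check, $x=1$ (equal diagonals) must return $\Gamma_4=1$, the square.

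Next I would invert (\ref{rombo}) to obtain (\ref{35}), $x^{2}=(\Gamma_4+3)/(3\Gamma_4+1)$, and substitute it into $4r_{13}^{2}=r_{12}^{2}+r_{34}^{2}$ to record the companion ratio $(r_{13}/r_{12})^{2}$ appearing in the statement. The number of families and their parameter windows then follow from the elementary constraint $x^{2}>0$: the rational function $(\Gamma_4+3)/(3\Gamma_4+1)$ is positive exactly when numerator and denominator share a sign, which the values $\Gamma_4=-3$ and $\Gamma_4=-1/3$ split into the admissible windows $\Gamma_4<-3$ and $\Gamma_4>-1/3$. Each window is a single monotone branch of the square root, hence one one-parameter family. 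Comparing $x^{2}$ with $1$, that is, asking whether $r_{34}\lessgtr r_{12}$, yields the geometric claim for $\Gamma_4>0$: the larger vortex always lies on the shorter diagonal, since $x^{2}<1\iff\Gamma_4>1$ while $0<\Gamma_4<1$ forces $x^{2}>1$ with the heavier pair $1,2$ on the shorter diagonal.

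For the dynamical statements I would use (\ref{lambda}). Computing $L=\sum_{l<k}\Gamma_l\Gamma_k$ and $I=\tfrac{1}{2\Gamma}\sum_{i<j}\Gamma_i\Gamma_j r_{ij}^{2}$ for the rhombus and simplifying with (\ref{35}) should collapse $\lambda=-L/2I$ to (\ref{36}), $\lambda=-3(1+\Gamma_4)/r_{12}^{2}$. Because $r_{12}^{2}>0$, the sign of $\lambda$ is opposite to that of $1+\Gamma_4$: this gives $\lambda>0$ on the family $\Gamma_4<-3$ and $\lambda<0$ on the family $\Gamma_4>-1/3$, matching the statement, and the vanishing of the numerator at $\Gamma_4=-1$ (where (\ref{35}) forces $x^{2}<0$) confirms that no rhombus absolute equilibrium exists. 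Specializing to $\Gamma_4=1$ recovers the square, completing the listed conclusions.

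The step I expect to be most delicate is the elimination in the first paragraph: among the six equations of (\ref{fv}) (equivalently (\ref{md})) one must select a pair whose distance and area ratios are not simultaneously indeterminate $0/0$ on the rhombus, and then track the signs of the oriented areas $A_i$ carefully, so that the eliminated relation is precisely $\Gamma_4=(x^{2}-3)/(1-3x^{2})$ and not a sign- or factor-distorted variant. A secondary obstacle is admissibility: for each $\Gamma_4$-window I would need to verify that the computed $x$ gives a non-degenerate planar rhombus with distinct vertices (equivalently $k+l>0$ in the kite coordinates) of convex shape, so that it is counted once as a genuine relative equilibrium rather than as a collinear or coincident limit.
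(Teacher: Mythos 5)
Your proposal follows essentially the same route as the paper: reduce Dziobek's equations via the rhombus symmetry to obtain $\Gamma_4=(x^2-3)/(1-3x^2)$, invert to get $x^2=(\Gamma_4+3)/(3\Gamma_4+1)$ and the companion ratio from $4r_{13}^2=r_{12}^2+r_{34}^2$, read off the two admissible windows from $x^2>0$, and use $\lambda=-L/(2I)=-3(1+\Gamma_4)/r_{12}^2$ for the signs of the angular velocity. This matches the paper's derivation in the rhombus subsection and its short proof, and your explicit attention to area orientations and admissibility is if anything slightly more careful than the original.
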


\begin{proof}  The formula for $(r_{13}/r_{12})^{2}$ comes from $1+x^{2} = 4(r_{13}/r_{12})^{2}$.
For the case $\Gamma_{4} > 0$ we get from equation (\ref{rombo}) that $\Gamma_{4} < 1$ if and only if $1 < x < \sqrt{3}$.  Beginning with the square at $\Gamma_{4} = 1$, as $\Gamma_{4} \in (0,1)$ , the ratio of the diagonals of the rhombus increases from $1$ to $\sqrt{3}$, so the different vorticity ($\Gamma_{4}$) is located on the longer diagonal, while if $\Gamma_{4}>1$, the ratio of the diagonals of the rhombus is always less than $1$ and decreases as  $\Gamma_{4}$ increases. In this case $\Gamma_{4}$ is located on the shorter diagonal.

At $\Gamma_{4} = -3$ a bifurcation occurs, and a new family is born emerging out of a binary collision between vortices $3$ and $4$.  This family has the vortex  $\Gamma_{4}$ located on the shorter diagonal. \qquad
\end{proof}
\section*{Acknowledgments}
This work has been partially supported by CONACYT-M\'exico, grant 128790 and by a NSERC Discovery grant.

% Fakesection

\end{document}